\newtheorem{theorem}{Theorem}
\newtheorem{lemma}[theorem]{Lemma}
\newtheorem{corollary}[theorem]{Corollary}
\newtheorem{definition}[theorem]{Definition}
\renewcommand{\epsilon}{\varepsilon}
\newcommand{\nfa}{\textnormal{NFA}}
\newcommand{\dfa}{\textnormal{DFA}}
\DeclareMathOperator{\crank}{r}
\DeclareMathOperator{\dpw}{dpw}
\DeclareMathOperator{\snum}{s}
\DeclareMathOperator{\starheight}{h}
\newcommand{\NP}{{\bf NP}}
\newcommand{\PSPACE}{{\bf PSPACE}}
\newcommand{\EXPSPACE}{{\bf EXPSPACE}}
\newcommand{\trans}[1]{\stackrel{\!#1}{\rightarrow}}
\begin{document}

\title{Digraph Complexity Measures and Applications in Formal Language Theory}

%\author{Hermann Gruber
%\thanks{This paper is a completely revised and 
%expanded version of research presented at the 
%$4$th Workshop on Mathematical and Engineering Methods in 
%Computer Science (MEMICS 2008), see~\cite{Gruber08b}.}
%\thanks{Part of the work was done while the author was at 
%Institut f\"ur Informatik, Justus-Liebig-Universit\"at Giessen, Germany.}}

% \address{knowledgepark AG, Leonrodstr. 68, D-80636 M\"unchen, Germany\\
%   email: {\tt hermann.gruber@knowledgepark-ag.de}
% }
\author
{Hermann Gruber\thanks{This paper is a completely revised and 
expanded version of research presented at the 
$4$th Workshop on Mathematical and Engineering Methods in 
Computer Science (MEMICS 2008), see~\cite{Gruber08b}.}
\thanks{Part of the work was done while the author was at 
Institut f\"ur Informatik, Justus-Liebig-Universit\"at Giessen, Germany.}\\
knowledgepark AG, Leonrodstr. 68, D-80636 M\"unchen, Germany}
\date{}
\maketitle

 \begin{abstract}
  We investigate structural complexity measures on digraphs, in particular
  the cycle rank. This concept is 
  intimately related to a classical topic in formal language theory, 
  namely the star height of regular languages. We explore this connection,
  and obtain several new algorithmic insights regarding both 
  cycle rank and star height. Among other results, we 
  show that computing the cycle rank is $\NP$-complete, 
  even for sparse digraphs of maximum outdegree~$2$. 
  Notwithstanding, we 
  provide both a polynomial-time approximation algorithm 
  and an exponential-time exact algorithm for this problem. 
  The former algorithm yields an $O((\log n)^{3/2})$-approximation 
  in polynomial time, whereas the latter yields the optimum solution, and 
  runs in time and space $O^*(1.9129^n)$ on digraphs of maximum 
  outdegree at most two.

  Regarding the star height problem, we identify a subclass 
  of the regular languages for which we can precisely determine the 
  computational complexity of the star height problem. Namely, the 
  star height problem for bideterministic languages is $\NP$-complete, 
  and this holds already for binary alphabets. Then we translate
  the algorithmic results concerning cycle rank to the bideterministic
  star height problem, thus giving a polynomial-time approximation
  as well as a reasonably fast exact exponential algorithm 
  for bideterministic star height.  
 \end{abstract}
\noindent
{\small
{\bf Keywords:} digraph, cycle rank, regular expression, star height, ordered coloring, vertex ranking \\
{\bf MSC:} 05C20 (primary),  68Q45, 68Q25 (secondary)\\
%05C20 Directed Graphs, 68Q25 Analysis of algorithms and Problem Complexity, 68Q45 Formal languages and Automata
{\bf ACM CCS:} G2.2 Graph Theory, F2.2 Nonnumerical Algorithms and Problems, F4.3 Formal Languages} 
\section{Introduction}
In the theory of undirected graphs, structural complexity measures 
for graphs, such as treewidth and pathwidth, have gained an important role,
both from a structural and an algorithmic viewpoint, see 
e.g.~\cite{Diestel06,KT05}. 
However, networks arising in some domains are more adequately 
modeled as having directed edges. Therefore in recent years, attempts
have been made to lift such measures and parts of the theory of 
undirected graphs to the case of 
digraphs. Several recent works show that, 
while there often exist partial analogues to the undirected case, 
the picture for digraphs is much more 
involved~\cite{Barat06,BDHKO11,GHKLOR09,KO11,Safari05}. 
We discuss some of these
measures, relate them to each other, and investigate their algorithmic 
aspects. Interestingly, we are able to show that all these 
complexity measures bound each other within a factor logarithmic 
in the order of the digraph, thus paralleling the case of 
undirected graphs~\cite{BGHK95}.  
We focus in particular on the 
cycle rank, a digraph complexity 
measure originally motivated by studies in 
formal languages~\cite{Eggan63}. Apparently, there is a 
renewed interest in this 
measure, as witnessed by recent research efforts~\cite{Alpert10,BCLMZ11,
GHKLOR09,Hunter11,LKM11}.

We obtain the following 
results on computing the cycle rank: The decision version of the 
problem is $\NP$-complete, and this remains true for graphs of 
maximum outdegree at most~$2$.
Previously, the problem was known to be $\NP$-complete on undirected
symmetric digraphs of unbounded degree, see~\cite{BDJKKMT98}. 
On the positive side, we design a polynomial-time 
$O((\log n)^{3/2})$-approximation algorithm, as well as 
an exact exponential algorithm algorithm computing the cycle 
rank of digraphs. If the given digraph is of bounded outdegree, the 
latter algorithm runs 
in time and space $O^*((2-\epsilon)^n)$, where~$n$ is the order of 
the digraph, and~$\epsilon$ is a constant depending on 
the maximum outdegree. For unbounded outdegree, 
the running time is still $O^*(2^n)$, whereas for 
maximum outdegree~$2$, we even attain a bound of $O^*(1.9129^n)$.
As a further application, we also obtain an exact algorithm
for the directed feedback vertex set problem on digraphs of
maximum outdegree~$2$, which runs within the same time bound.

Then we present applications of these findings to the theory 
of regular expressions.
The star height of a regular language is defined as the 
minimum nesting depth of stars needed in order to describe 
that language by a regular expression. 
Already in the $1960$s, Eggan~\cite{Eggan63} raised
the question whether the star height can be determined
algorithmically. It was not until~$25$ years
later that Hashiguchi found a rather complicated decidability
proof~\cite{Hashiguchi88}. Even today, 
the best known algorithm has doubly exponential running time, 
and is arguably still impractical~\cite{Kirsten10}.
Therefore, we study the complexity of the star height problem 
when restricted to a subclass of the regular languages.
We show that the star height problem for bideterministic languages
is $\NP$-complete, and this remains true when restricted to binary 
alphabets. Furthermore, we present both an efficient approximation
algorithm and an exact exponential algorithm for this problem. 
The key to these results are the corresponding 
algorithms for the cycle rank of digraphs 
mentioned above; also the above mentioned bounds carry 
over to this application in formal language theory.

The paper is organized as follows: After this introduction, we 
recall in Section~\ref{sec:prelim} some basic notions 
from graph theory and from automata theory. We study structural 
properties of the cycle rank of digraphs in Section~\ref{sec:cycle-rank}.
Section~\ref{sec:cr-algo} is devoted to algorithmic aspects 
of cycle rank.
Afterward, we apply these findings in Section~\ref{sec:star-height} 
to the star height problem on bideterministic languages.
We complete the paper in Section~\ref{sec:conclusion}
by showing up possible directions for further research.

\section{Preliminaries}\label{sec:prelim}

\subsection{Digraphs}
We assume familiarity with basic notions in graph theory, 
as contained in~\cite{Diestel06}, so we only fix the notation and 
a few specialties below.
A {\em digraph} $G=(V,E)$ consists of a finite set of {\em vertices}~$V$ 
and a set of {\em edges}~$E\subseteq V^2$. 

We refer to an edge of the form~$(v,v)$ as a 
{\em loop}; A digraph without loops is called {\em loop-free}.

The {\em outdegree} of a vertex $v$ is defined as the number
of vertices~$u$ such that~$(u,v) \in E$. The {\em total degree}
is defined as the number of distinct vertices~$u$ 
having~$(u,v) \in E$ or $(v,u)\in E$.  

If the edge relation of a digraph~$G$ is symmetric, we say~$G$ is an 
(undirected) {\em graph}. By taking the symmetric 
closure of the edge relation of a digraph, 
we obtain its undirected counterpart---of course, 
this is a many-to-one correspondence. 

For a subset of vertices $U\subseteq V$, 
let $G[U]$ denote the sub(di)graph {\em induced by} $U$, which is
obtained by restricting the vertex set of~$G$ to~$U$ and redefining
the edge set~$E$ appropriately. In this context, we will often 
use $G-U$ as a shorthand for $G[V\setminus U]$ and $G-v$ for 
$G[V\setminus\{v\}]$. A subset of vertices $U\subseteq V$ 
is {\em strongly connected} if for every $v\in V$ there is a 
(possibly empty) path from~$v$ to itself. Maximal strongly connected 
subsets of~$V$ are called {\em strongly connected components}; a strongly
connected subset~$S$ is {\em nontrivial} if the subdigraph 
$G[S]$ induced by~$S$ contains at least one edge (note that this also allows 
the case $S=\{v\}$ if~$v$ has a loop). A digraph is {\em acyclic}
if all of its strongly connected components are trivial.

\subsection{Formal Languages}
As with digraphs, we only recall 
some basic notions in formal
language and automata theory---for a thorough treatment, the 
reader might want to consult a textbook such as~\cite{HU79}. 
In particular, let~$\Sigma$ be a finite alphabet and~$\Sigma^*$ the set of all
words over the alphabet~$\Sigma$, including the empty word~$\lambda$.
The length of a word~$w$ is denoted by~$|w|$, where $|\lambda|=0$.
A {\em (formal) language\/} over the alphabet~$\Sigma$
is a subset of~$\Sigma^*$. 

The {\em regular expressions} over an alphabet~$\Sigma$ are defined
recursively in the usual way:\footnote{%
  For convenience, parentheses in regular expressions are sometimes
  omitted and the concatenation is simply written as
  juxtaposition. The priority of operators is specified in the usual
  fashion: concatenation is performed before union, and star before
  both product and union.}  $\emptyset$, $\lambda$, and every
letter~$a$ with $a\in \Sigma$ is a regular expression; and when~$r_1$
and~$r_2$ are regular expressions, then $(r_1+r_2)$, $(r_1\cdot r_2)$,
and $(r_1)^*$ are also regular expressions.  The language defined by a
regular expression~$r$, denoted by $L(r)$, is defined as follows:
$L(\emptyset)=\emptyset$, $L(\lambda)=\{\lambda\}$, $L(a)=\{a\}$,
$L(r_1+r_2)=L(r_1)\cup L(r_2)$, $L(r_1\cdot r_2)=L(r_1)\cdot L(r_2)$,
and $L(r_1^*)=L(r_1)^*$. For a regular expression $r$ over~$\Sigma$, 
the {\em star height}, denoted
by $\starheight(r)$, is a structural complexity measure inductively defined by:
$\starheight(\emptyset)=\starheight(\lambda)=\starheight(a)=0$, $\starheight(r_1\cdot r_2) = \starheight(r_1 + r_2) =
\max\left(\starheight(r_1),\starheight(r_2) \right)$, and $\starheight(r_1^*) = 1 + \starheight(r_1)$.  The
star height of a regular language $L$, denoted by $\starheight(L)$, is then
defined as the minimum star height among all regular expressions
describing $L$. 

It is well known 
that regular expressions are exactly as powerful as
finite automata, i.e., for every regular expression one can construct
an equivalent (deterministic) finite automaton and {\it vice
  versa}, see~\cite{HU79}. Finite automata are defined as follows: A {\em
  nondeterministic finite automaton\/} (\nfa) is a $5$-tuple
$A=(Q,\Sigma,\delta, q_0,F)$, where~$Q$ is a finite set of states,
$\Sigma$ is a finite set of input symbols,
$\delta:Q\times\Sigma\rightarrow 2^Q$ is the transition function,
$q_0\in Q$ is the initial state, and $F\subseteq Q$ is the set of
accepting states.  The {\em language accepted\/} by the finite
automaton~$A$ is defined as $L(A) =\{\,w\in \Sigma^*\mid
\delta(q_0,w)\cap F\neq\emptyset\,\}$, where~$\delta$ is naturally
extended to a function $Q\times\Sigma^*\rightarrow 2^Q$. A
nondeterministic finite automaton $A=(Q,\Sigma,\delta,Q_0,F)$ is {\em
  deterministic\/}, for short a \dfa, if $|\delta(q,a)| \le 1$, for
every $q\in Q$ and $a\in\Sigma$. In this case we simply write
$\delta(q,a)=p$ instead of $\delta(q,a)=\{p\}$.  Two (deterministic or
nondeterministic) finite automata are {\em equivalent\/} if they
accept the same language.

A deterministic finite automaton is 
{\em bideterministic}, if it has a single final state, and if the \nfa{} 
obtained by reversing all transitions and exchanging the roles of 
initial and final state is again deterministic---notice that, 
by construction, this \nfa{} in any case accepts the reversed language. 
A regular language~$L$ is {\em bideterministic\/}
if there exists a bideterministic finite automaton accepting $L$. 
These languages form a proper subclass of the regular languages~\cite{Angluin82}. 

\section{Cycle Rank of Digraphs}\label{sec:cycle-rank}

\subsection{Cycle Rank and Directed Elimination Forests}

Originally suggested in the $1960$s by Eggan and B\"uchi in the course of 
investigating the star height of regular languages~\cite{Eggan63}, 
the cycle rank is probably one of the oldest structural 
complexity measures on digraphs. In this section, 
we delve into the structural foundations of cycle rank.  
 
\begin{definition}\label{defn:cr}
The {\em cycle rank} of a directed graph 
$G=(V,E)$, denoted by $\crank(G)$, is inductively defined as
follows: If~$G$ is acyclic, then $\crank(G) = 0$.
If~$G$ is strongly connected and $E \neq \emptyset$, 
then $\crank(G) = 1 + \min_{v \in V}\{\,\crank(G-v)\,\}$.
If~$G$ is not strongly connected, then $\crank(G)$
equals the maximum cycle rank among all strongly connected components of~$G$.
\end{definition}
We note that the requirement $E \neq \emptyset$ in the above definition 
allows to differentiate between acyclic digraphs and (otherwise acyclic) 
digraphs with loops. 
We also remark that the cycle rank can be equivalently defined using 
decompositions, compare~\cite{McNaughton69}:
\begin{definition}\label{defn:elimination-forest} 
A {\em directed elimination tree} for 
a nontrivially strongly connected digraph~$G=(V,E)$ is a rooted tree 
$T=(\mathcal{T}, \mathcal{E})$ having the following properties:
\begin{enumerate}[a)]
%(a) 
\item
$\mathcal T \subseteq V \times 2^{V}$,
and if $(x,X)\in \mathcal T$, then $x \in X$.
%(b) 
\item
The root of the tree is $(v,V)$ for some $v \in V$.
%(c) 
\item
There is no pair distinct vertices of the form $(x,X)$ and $(y,X)$ in
  the forest.
%(d) 
\item If $(x,X)$ is a node in $T$, and $G[X]-x$ has $j\ge 0$ nontrivial 
strongly connected components $Y_1,\ldots,Y_j$, then $(x,X)$ has exactly
$j$ children of the form $(y_1,Y_1),\ldots (y_j,Y_j)$ for some
$y_1,\ldots,y_j\in V$.
\end{enumerate}
A {\em directed elimination forest} for a digraph $G$ with $k\ge 0$
nontrivial strongly connected components $C_1,\ldots C_k$, is a rooted forest 
consisting of directed elimination trees for $G[C_i]$, $1\le i\le k$.
\end{definition}
Figure~\ref{fig:digraph} illustrates this concept by an example.
\begin{figure}[t]
\begin{center}
  \centering
  \includegraphics{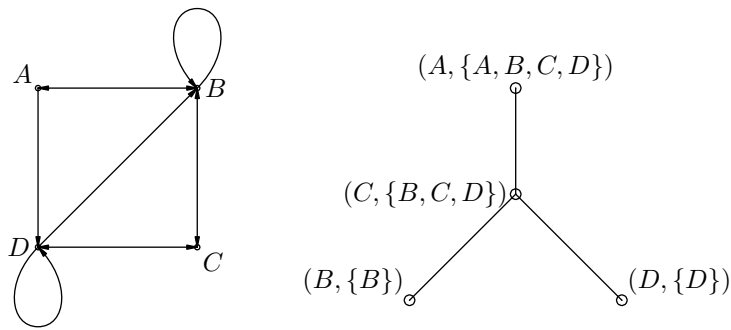}
  \caption{An example digraph and a directed elimination tree for it.}
  \label{fig:digraph}
\end{center}
\end{figure}
It is shown in~\cite{McNaughton69} that the minimum height among all directed 
elimination forests for~$G$ equals the cycle rank of~$G$. 
Interestingly, the concept of elimination forests was
rediscovered in the context of sparse matrix 
factorization, in~\cite{Schreiber82} 
for the undirected case and in~\cite{EL05} for the directed case. 

\subsection{Cycle Rank and other Digraph Complexity Measures}

We compare the cycle rank with two other structural complexity measures, namely 
weak separator number and directed pathwidth. The first 
measure is a generalization of separator 
number (see e.g.~\cite{BGHK95,RS86,Gruber10}) to digraphs: 
\begin{definition}
Let $G=(V,E)$ be a digraph and let $U \subseteq V$ be a 
set of vertices. A set of vertices~$S$ is a {\em weak balanced separator\/} 
for~$U$ if every strongly connected component of $G[U \setminus S]$ 
contains at most $\left\lceil\frac12|U-S|\right\rceil$ vertices.  
The {\em weak separator number\/} of~$G$, denoted by $\snum(G)$, 
is defined as the maximum size, taken over all subsets~$U\subseteq V$, 
among the minimum weak balanced separators for~$U$. 
\end{definition}
Some readers will feel that the above definition is a bit contrived 
because of the ceiling operator~$\lceil \cdot \rceil$. 
But this is an essential detail, as it guarantees that a 
digraph with a weak balanced separator of size~$k$ 
will always admit a weak balanced separator of size~$k+1$. 

In order to relate weak separator number and cycle rank, 
we need the following recurrence: 
For integers $k,n \ge 1$, let $R_k(n)$ be given by the recurrence
\[R_k(n) = k + R_k\left( \left\lceil \frac{n-k}{2} \right\rceil\right),\]
with $R_k(r_0) = r_0$ for $r_0\le k$.

\begin{lemma}\label{lem:cr-separator}
Let~$G$ be a loop-free digraph with~$n$ vertices and weak separator number 
at most~$k$. Then $\crank(G)\le R_k(n)-1$.
\end{lemma}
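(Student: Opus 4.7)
The plan is induction on $n$, using the single-vertex characterization of cycle rank. First I would establish the auxiliary inequality
\[
\crank(G)\ \le\ |S|\ +\ \max_C \crank(G[C])
\]
for any digraph $G$ and any $S\subseteq V$, where $C$ ranges over the strongly connected components of $G-S$. This follows by iterating the elementary bound $\crank(G)\le 1+\crank(G-v)$, which itself can be read off Definition~\ref{defn:cr} by a short case split on whether $G$ is strongly connected.

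Two monotonicity facts are needed next. First, $R_k$ is nondecreasing in $n$, by a short induction using that $\lceil(n-k)/2\rceil$ is nondecreasing in $n$. Second, the weak separator number is monotone under taking induced subgraphs, since for $U\subseteq C\subseteq V$ the digraphs $G[U\setminus S]$ and $G[C][U\setminus S]$ coincide; hence $\snum(G[C])\le \snum(G)\le k$ for every vertex subset $C$ encountered in the recursion.

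The base case is $n\le k$, where $R_k(n)=n$, and here I invoke the trivial bound $\crank(G)\le n-1$ valid for every loop-free digraph on $n\ge 1$ vertices (itself a short induction via $\crank(G)\le 1+\crank(G-v)$). For the inductive step $n>k$, I apply $\snum(G)\le k$ with $U=V$ to obtain a weak balanced separator $S$; after padding (see below) I may assume $|S|=k$, so every strongly connected component of $G-S$ has at most $\lceil(n-k)/2\rceil<n$ vertices. Each such $G[C]$ is loop-free with weak separator number at most $k$, so by the induction hypothesis and monotonicity of $R_k$ we have $\crank(G[C])\le R_k(\lceil(n-k)/2\rceil)-1$. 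The auxiliary inequality then gives
\[
\crank(G)\ \le\ k+R_k\!\left(\left\lceil\frac{n-k}{2}\right\rceil\right)-1\ =\ R_k(n)-1.
\]

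The main technical obstacle is the padding step alluded to above: extending a weak balanced separator of size $s<k$ to one of size $k$ while preserving the separator property. Adding one vertex at a time, the permitted SCC-size bound drops from $\lceil(n-s)/2\rceil$ to $\lceil(n-s-1)/2\rceil$, which either stays the same (and any additional vertex works) or decreases by exactly one. In the latter case $n-s$ is odd, and a simple counting argument shows that any SCC of $G-S$ attaining the extremal size $(n-s+1)/2$ is unique, so removing any of its vertices restores the stricter bound. Iterating this one-vertex extension $k-s$ times produces the desired separator of size $k$.
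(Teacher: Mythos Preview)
Your proof is correct and follows essentially the same inductive strategy as the paper's proof: find a weak balanced separator of size exactly~$k$, recurse on the strongly connected components of $G-S$, and combine via $\crank(G)\le k+\max_C\crank(G[C])$. The only cosmetic difference is that the paper passes to the digraph $G^\ell$ obtained by adding self-loops (so that the target inequality becomes $\crank(G^\ell)\le R_k(n)$ and the base case reads $\crank(G^\ell)\le n$), whereas you work directly with the loop-free bound $\crank(G)\le n-1$; your treatment of the padding step and the monotonicity of $R_k$ and $\snum$ is in fact more explicit than the paper's.
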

\begin{proof}
We generalize a proof given in~\cite{Gruber10} to the case of digraphs.

Let $G^\ell$ be the digraph obtained from~$G$ by adding 
self-loops to each vertex. Then $\crank\!\left(G^\ell\right) = \crank(G)+1$, 
so we may prove instead that $\crank\!\left(G^\ell\right)\le R_k(n)$. 

We prove the statement by induction on the order~$n$ 
of~$G^\ell$. The base cases $n \le k$ of the induction 
are easily seen to hold, since the cycle rank of a 
digraph is always bounded above by its order. 

For the induction step, assume $n > k$. 
As already mentioned, if $G^\ell$ admits a weak 
balanced separator of size at most~$k$, then 
it also has a weak balanced separator of size 
exactly~$k$. Let~$X$ be such a separator. 

Denote the strongly connected components of $G^\ell-X$ 
by $C_1, \ldots, C_p$. 
Then $\crank\!\left(G^\ell\right) \le k + \crank\!\left(G^\ell-X\right)$, and 
by definition of cycle rank, 
\[\crank\!\left(G^\ell-X\right) \le \max_{1\le i \le p}\crank\!\left(G^\ell[C_i]\right).\] 
As~$X$ is a weak balanced separator, we have 
$|C_i| \le \lceil\frac{n-k}{2}\rceil$ for 
$1\le i \le p$, so we can apply the induction 
hypothesis to obtain 
\[\max_{1\le i \le p}\crank\!\left(G^\ell[C_i]\right) \le R_k(\lceil\frac{n-k}{2}\rceil).\]
Putting these pieces together, we have 
$\crank\!\left(G^\ell\right)\le k + R_k(\lceil\frac{n-k}{2}\rceil)$, as desired.
\end{proof}

The recurrence~$R_k(n)$ is studied in~\cite{Gruber10}, where 
also the inequality $R_k(n) \le k \cdot \log(n/k)$ is derived.\footnote{Here $\log$ denotes the binary logarithm.}
We thus have the following bound:

\begin{corollary}\label{cor:cr-separator}
Let~$G$ be a loop-free digraph with~$n$ vertices and weak separator number 
at most~$k$. Then $\crank(G)\le k \cdot \log(n/k)-1$.\qed
\end{corollary}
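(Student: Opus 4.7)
The approach is essentially a one-line substitution. Lemma~\ref{lem:cr-separator} already furnishes the key inequality $\crank(G) \le R_k(n) - 1$, reducing the task to bounding the recursive quantity $R_k(n)$ in closed form. The paragraph immediately preceding the corollary already states this closed-form bound $R_k(n) \le k \log(n/k)$, attributed to~\cite{Gruber10}. Chaining the two inequalities yields the claim immediately, so all the real content of the corollary has been imported from previous work.

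If one wishes to see why the estimate $R_k(n) \le k \log(n/k)$ holds, I would unroll the recurrence $R_k(n) = k + R_k(\lceil (n-k)/2\rceil)$. Bounding $\lceil (n-k)/2\rceil$ from above by $n/2$, one sees that after $t$ unrollings the argument of $R_k$ has dropped to at most $n/2^t$, while the additive contribution has grown to $tk$. The recursion hits its base case (where $R_k$ simply returns its argument, which is then at most $k$) as soon as $n/2^t \le k$, that is, after roughly $\log(n/k)$ steps. This gives $R_k(n) \le k \log(n/k) + O(k)$, and the tighter accounting in~\cite{Gruber10}, which exploits the $-k$ inside the ceiling to absorb the base-case contribution, sharpens the estimate to $k \log(n/k)$.

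There is essentially no obstacle to overcome: the corollary is an immediate consequence of Lemma~\ref{lem:cr-separator} together with a purely arithmetical estimate on a one-variable recurrence that has already been analyzed in the cited companion paper. Accordingly, the proof reduces to a single sentence invoking Lemma~\ref{lem:cr-separator} and the bound on $R_k(n)$, followed by a~$\qed$.
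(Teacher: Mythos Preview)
Your proposal is correct and matches the paper's own approach exactly: the corollary is stated with a \qed{} and no separate proof, relying precisely on Lemma~\ref{lem:cr-separator} together with the bound $R_k(n)\le k\log(n/k)$ quoted from~\cite{Gruber10} in the preceding paragraph. Your additional heuristic unrolling of the recurrence is a helpful aside but not part of the paper's argument.
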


This inequality is sharp already in the undirected case, 
see~\cite{Gruber10}. Previously, a looser bound comparing cycle rank 
to a similar notion of weak separator number was 
given in~\cite{GH08a}. It is easy to see that 
Corollary~\ref{cor:cr-separator} improves upon the previous bound.

\medskip

We turn to the comparison with directed pathwidth.
That measure was introduced by Reed, Seymour and Thomas 
(cf.~\cite{Barat06}) as a generalization 
of pathwidth to digraphs.
\begin{definition}
For a digraph $G=(V,E)$, a {\em directed path decomposition} of~$G$ is a 
sequence $W_1W_2\cdots W_r$ of subsets of~$V$, called {\em bags}, 
such that
\begin{enumerate}[a)]
\item each vertex is contained in at least one bag, 
\item for all $i<j<k$ holds~$W_i \cap W_k \subseteq W_j$,
and 
\item for each edge $(u,v)$ in~$E$, there is a bag containing both endpoints, 
or there exist $i,j$ with $i<j$ 
such that the tail~$u$ is in~$W_i$ and the head~$v$ is
in~$W_j$. 
\end{enumerate}
The {\em width} of a directed path decomposition is defined as the maximum
cardinality among all bags minus~$1$. The directed pathwidth is defined as
the minimum width among all directed path decompositions for~$G$. 
\end{definition}
A directed path decomposition is {\em normal}, if adjacent bags may differ in 
at most one vertex, and it is easy to transform a directed path decomposition 
into a normal one. Based on normal path decompositions, it is not difficult to 
derive the following result:
\begin{lemma}
Let~$G$ be a digraph. Then $\snum(G) \le \dpw(G)$. \qed 
\end{lemma}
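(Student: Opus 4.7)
The plan is to exploit the classical separator-like property of (directed) path decompositions. I take a normal directed path decomposition $W_1,\ldots,W_r$ of $G$ of width $k=\dpw(G)$, and for each index $i$ set $L_i = \bigcup_{j<i} W_j \setminus W_i$ and $R_i = \bigcup_{j>i} W_j \setminus W_i$, so that $V = L_i \cup W_i \cup R_i$ is a partition of the vertex set.

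My first step is to verify that no directed edge runs from $R_i$ to $L_i$. If $(u,v)$ were such an edge, then property~(c) of the decomposition would force either a common bag of $u$ and $v$, or a bag containing $u$ preceding one containing $v$; both are impossible, since the bag-interval of $u$ lies strictly to the right of index~$i$ and that of $v$ strictly to the left. Consequently, for any $U\subseteq V$, every strongly connected subset of $G[U\setminus W_i]$ is contained entirely in $L_i\cap U$ or entirely in $R_i\cap U$.

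Given $U$, I define $f(i)=|L_i\cap U|-|R_i\cap U|$. A brief case analysis of the normal decomposition (adding a vertex moves it from $R_i$ to $W_i$, removing a vertex moves it from $W_i$ to $L_i$, and otherwise nothing changes) shows that $f$ is non-decreasing in $i$ with increments in $\{0,+1\}$. After padding the decomposition with empty bags at both ends, $f$ runs from $-|U|$ to $+|U|$, so I pick the smallest index $i^*$ with $f(i^*)\ge 0$. Because $f$ jumps by at most one, necessarily $f(i^*-1)=-1$ and $f(i^*)=0$, and the transition from $W_{i^*-1}$ to $W_{i^*}$ either inserts or deletes a vertex of $U$.

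In the insertion case I use $W_{i^*-1}$ as the separator: it has size at most $k$ (since the next bag is strictly larger), and since $f(i^*-1)=-1$ the larger of $|L_{i^*-1}\cap U|$ and $|R_{i^*-1}\cap U|$ equals $\lceil |U\setminus W_{i^*-1}|/2\rceil$, precisely what is required for a weak balanced separator. In the deletion case $W_{i^*}$ works symmetrically, with the two sides of equal size and $|W_{i^*}|\le k$. Either way, $U$ admits a weak balanced separator of size at most $k$, yielding $\snum(G)\le \dpw(G)$. The delicate point is exactly this off-by-one between the guaranteed bag size $\le k+1$ and the desired separator size $\le k$; this is why I have to exploit the specific transition at $i^*$ rather than simply pick any ``balanced'' bag.
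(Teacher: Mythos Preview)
Your argument is correct and follows precisely the route the paper indicates: the paper merely states that the lemma is ``not difficult to derive'' from normal directed path decompositions and omits the proof entirely, so there is nothing further to compare against. Your handling of the off-by-one between bag size $\le k+1$ and separator size $\le k$ via the insertion/deletion dichotomy at the crossing index~$i^*$ is exactly the kind of detail the paper suppresses, and it is carried out correctly.
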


How does cycle rank relate to directed pathwidth? 
We can answer this using directed elimination forests.
\begin{lemma}\label{thm:dpw-crank}
Let~$G$ be a digraph. Then $\dpw(G) \le \crank(G)$. 
\end{lemma}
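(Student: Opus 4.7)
The plan is to construct a directed path decomposition of width at most $\crank(G)$ by induction on $h := \crank(G)$, mirroring the recursive structure of Definition~\ref{defn:cr}; in essence this amounts to linearizing a depth-first traversal of a minimum-height directed elimination forest. The base case $h = 0$ is immediate: $G$ is acyclic, and any topological ordering $v_1,\ldots,v_n$ of its vertices yields the decomposition $W_i = \{v_i\}$ of width $0$, for which the three defining conditions are trivially satisfied.

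For the inductive step I would first split $G$ into its strongly connected components $C_1,\ldots,C_p$ indexed according to a topological ordering of the condensation, so that every edge of $G$ whose endpoints lie in different components runs from some $C_i$ to some $C_j$ with $i < j$. Each $C_i$ is then treated separately. If $C_i$ is trivial, take the single-bag decomposition $[\{v\}]$. If $C_i$ is nontrivial, then Definition~\ref{defn:cr} supplies a vertex $x_i \in C_i$ with $\crank(G[C_i] - x_i) \le h - 1$, so I may apply the induction hypothesis to $G[C_i] - x_i$ and then insert $x_i$ into every bag of the resulting decomposition. Finally, concatenate the $p$ resulting blocks of bags in this order. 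Each block has width at most $(h-1) + 1 = h$, so the overall decomposition has width at most $h$.

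Verifying the three defining conditions is then routine. Coverage is clear. The interval property holds because every vertex lies in bags coming from a single block, where contiguity is inherited from the induction, while the added $x_i$ occupies every bag of its block. The edge condition is the only delicate point: edges inside $G[C_i] - x_i$ come from the inductive decomposition; edges incident to $x_i$ are safe because $x_i$ sits in every bag of its block and the other endpoint occurs in at least one such bag; and edges crossing from $C_i$ to $C_j$ with $i \neq j$ have $i < j$ by the topological indexing, so the tail appears in the $C_i$-block strictly before the head appears in the $C_j$-block. The only thing one has to keep straight is that this topological-ordering convention be applied consistently at every level of the recursion, but since the condensation is always a DAG this presents no real obstacle.
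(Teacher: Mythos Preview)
Your proof is correct and follows essentially the same approach as the paper: induct on the cycle rank (equivalently, on the height of an elimination forest), process the strongly connected components in topological order, recurse on $G[C_i]-x_i$, add $x_i$ to every bag of that block, and concatenate. Your write-up is in fact a bit more careful than the paper's, since you explicitly treat trivial components with singleton bags and spell out the verification of all three axioms of a directed path decomposition, including the cross-component edge case.
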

\begin{proof}
We prove by induction that each directed elimination 
forest of height~$k$ for~$G$ can be transformed into a 
directed path decomposition for~$G$ of width at most~$k$.

If~$k=0$, then~$G$ is acyclic, and thus clearly admits 
a directed path decomposition of with~$0$.

For the induction step, assume the directed elimination 
forest for $G$ has roots $(x_1,C_1)$, $(x_2,C_2)$,\ldots, 
$(x_r,C_r)$, with the strongly connected components~$C_i$
in topological order. Let $G_i = G[C_i]-x_i$. 
Then $G_i$ has cycle rank at most $k-1$. By induction assumption, 
each digraph $G_i$ admits a directed path decomposition of 
width at most $k-1$. By adding the vertex~$x_i$ to each bag in the 
respective decomposition for $G_i$, we obtain a directed 
path-decomposition for $G[C_i]$. Concatenating the~$r$ 
individual directed path decompositions 
while respecting the above topological order, 
we obtain a directed path decomposition of width at 
most~$k$ for~$G$, as desired. 
\end{proof}

Altogether, we have derived the following chain of inequalities:
\begin{theorem}\label{thm:bounds}
Let~$G$ be a loop-free digraph with~$n$ vertices and weak separator 
number~$k$. Then 
\[ k \le \dpw(G) \le \crank(G)\le k \cdot \log(n/k)-1. \qquad\qed\] 
\end{theorem}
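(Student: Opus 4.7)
The plan is essentially to observe that all three inequalities have already been established in this section, and the theorem is simply their concatenation. So the ``proof'' is to chain them together and verify that the quantities line up.

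First, I would cite the unnamed lemma preceding Lemma~\ref{thm:dpw-crank}, which says $\snum(G) \le \dpw(G)$; since $k = \snum(G)$ by hypothesis, this gives the leftmost inequality $k \le \dpw(G)$. Next, Lemma~\ref{thm:dpw-crank} gives directly $\dpw(G) \le \crank(G)$, which is the middle inequality. For the rightmost inequality, I would invoke Corollary~\ref{cor:cr-separator}: since $G$ is loop-free, has $n$ vertices, and has weak separator number at most $k$, the corollary yields $\crank(G) \le k \cdot \log(n/k) - 1$.

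There is no real obstacle here, since every piece is already in place. The only thing to double-check is that the hypotheses of the corollary are consistent with those of the theorem (loop-freeness, same $n$, same $k$), which they are. Thus the full chain
\[ k \le \dpw(G) \le \crank(G) \le k \cdot \log(n/k) - 1 \]
follows by transitivity, and no additional argument is required.
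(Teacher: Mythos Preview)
Your proposal is correct and matches the paper's approach exactly: the theorem carries a \qed{} in its statement because it is nothing more than the concatenation of the unnamed lemma ($\snum(G)\le\dpw(G)$), Lemma~\ref{thm:dpw-crank}, and Corollary~\ref{cor:cr-separator}, precisely as you describe.
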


Quite a few more structural complexity measures on digraphs were 
studied recently, such as directed tree-width, DAG-width, and Kelly-width. 
As detailed in~\cite{KKK11}, each of these measures 
is bounded below by a function that is linear in the weak separator 
number\footnote{The notion used in~\cite{KKK11} corresponds to 
our notion of weak separator number up to a constant factor.}. 
On the other hand, all of those are bounded above by the directed 
pathwidth (cf.~\cite{KKK11}), so Theorem~\ref{thm:bounds} will 
also serve for comparing them with cycle rank, and with 
weak separator number. 

\section{Computational Aspects of Cycle Rank}\label{sec:cr-algo}

\subsection{Computational Complexity}

We turn to algorithmic questions. First,
we classify the computational complexity of the decision 
problem {\bf CYCLE RANK}: Given a digraph~$G$ and an 
integer~$k$, determining whether the cycle rank of~$G$ is 
at most~$k$. 

\begin{theorem}\label{thm:cr-npc} 
The {\bf CYCLE RANK} problem is $\NP$-complete, and this still holds
when requiring that the input digraph is strongly connected.
\end{theorem}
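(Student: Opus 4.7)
The plan is to establish $\NP$-membership first, then $\NP$-hardness on strongly connected inputs. For membership, the natural certificate is a directed elimination forest in the sense of Definition~\ref{defn:elimination-forest}: such a forest has polynomial size, its defining conditions (a)--(d) can be checked in polynomial time by iterated strongly-connected-component computations, and by the equivalence cited from~\cite{McNaughton69} the existence of one of height at most~$k$ coincides with $\crank(G) \le k$.

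For $\NP$-hardness, I would reduce from the restriction of CYCLE RANK to symmetric digraphs, which is already known to be $\NP$-complete by~\cite{BDJKKMT98}. Given a symmetric instance $(G,k)$, construct $G'$ by adjoining a fresh \emph{hub} vertex~$v$ together with both arcs $(v,u)$ and $(u,v)$ for every $u \in V(G)$. The resulting $G'$ is again symmetric, and any two vertices of $G$ now communicate through~$v$, so that $G'$ is strongly connected; the transformation is computable in logarithmic space. The reduction then rests on the single identity
\[ \crank(G') \;=\; \crank(G) + 1, \]
so that $(G,k)$ is a yes-instance of the symmetric problem if and only if $(G',k+1)$ is a yes-instance of the strongly connected problem.

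The upper bound in that identity follows from Definition~\ref{defn:cr} by eliminating~$v$ first. For the lower bound, I would prove by induction on $|V(G)|$ that for every $u \in V(G)$ the digraph $G'-u$ is itself a cone over $G-u$, hence satisfies $\crank(G'-u) = \crank(G-u) + 1$. Combined with $\crank(G'-v) = \crank(G)$, substitution into the recurrence $\crank(G') = 1 + \min_{w} \crank(G'-w)$ yields
\[ \crank(G') \;=\; 1 + \min\!\left(\crank(G),\; 1 + \min_{u \in V(G)} \crank(G-u)\right) \;=\; 1 + \crank(G), \]
where the last equality uses the elementary bound $\min_u \crank(G-u) \ge \crank(G) - 1$. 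The anticipated main obstacle is verifying this bound when $G$ is disconnected: here $\crank(G)$ is attained on a single strongly connected component, so one has to check that deleting a vertex from that component can drop its cycle rank by at most one while other components remain unaffected. Once that case analysis is settled, both parts of the theorem are established.
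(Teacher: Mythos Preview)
Your proof is correct. The $\NP$-membership argument via directed elimination forests as polynomial-size, polynomially checkable certificates is exactly what the paper does.

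For $\NP$-hardness the paper takes a shorter route: it simply observes that the undirected version of the problem (minimum height of an elimination forest, equivalently tree-depth) is already known to be $\NP$-complete on \emph{connected} inputs~\cite{BGHK95}, and a connected loop-free symmetric digraph is strongly connected, so no gadget is needed at all. Your cone construction together with the identity $\crank(G')=\crank(G)+1$ is a genuinely different argument. It requires more work---the induction on $|V(G)|$ and the bound $\min_u \crank(G-u)\ge \crank(G)-1$, whose disconnected case you correctly flag and which indeed goes through by a short case split on whether $u$ lies in a component realizing $\crank(G)$---but in exchange it is self-contained from the weaker hypothesis that the symmetric problem is $\NP$-hard, without relying on the ``already connected'' strengthening from the literature. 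One small caveat worth making explicit in the write-up: the identity $\crank(G')=\crank(G)+1$ needs $G$ to be loop-free (a single vertex with a loop already gives a counterexample), which is harmless here since the instances coming from~\cite{BDJKKMT98} are simple undirected graphs.
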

\begin{proof}
Membership in $\NP$ can be seen by the equivalent definition
using directed elimination forests: Let $G=(V,E)$ denote 
the given digraph. Every elimination forest for~$G$ contains at 
most~$|V|$ tree vertices, and each tree vertex is of size is at most~$|V|$.  
A nondeterministic polynomial-time bounded Turing machine can guess 
such a witness, and then verify 
that it indeed constitutes an elimination forest of height at most~$k$. 

For $\NP$-hardness, we use a corresponding result known for the 
undirected case. Given a symmetric loop-free digraph~$G$, it 
is easy to see (e.g. by~\cite[Lem.~2.2]{NO06})
that an undirected elimination forest of height~$k+1$ in the sense
of~\cite{BGHK95,NO06} corresponds to a directed elimination forest of 
height $k$ in our sense (the term $+1$ accounts for 
the slightly different definition of {\em height} used in~\cite{NO06}). 
However, determining the minimum height among all
undirected elimination forests is $\NP$-complete, 
also for (strongly) connected undirected graphs~\cite{BGHK95}.
\end{proof}
Using tools from formal language theory,
we will prove later that $\NP$-hardness still holds for digraphs of 
maximum outdegree at most~$2$ and of maximum total degree at most~$4$. 

\subsection{Approximate Computation}

How to cope with this negative result? One possibility
is to look for an approximate solution. Indeed, it is known 
that for undirected graphs, the cycle rank problem 
admits an input-dependent 
polynomial-time approximation algorithm~\cite{BGHK95}. 
In the following, we devise a more general 
approximation algorithm, which covers also the 
case of unsymmetric digraphs. The basic pattern of 
our algorithm for directed cycle rank is again 
divide-and-conquer along separators.  

\begin{theorem}\label{thm:approx-cr}
The {\bf CYCLE RANK} problem admits a polynomial-time 
approximation within a factor of~$O((\log n)^{3/2})$.
\end{theorem}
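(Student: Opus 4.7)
The plan is to imitate the classical divide-and-conquer strategy for vertex ranking on undirected graphs (as in \cite{BGHK95}), lifting it to the directed setting via weak balanced separators. The output of the algorithm will be a directed elimination forest whose height upper-bounds the cycle rank, so the ratio between its height and $\crank(G)$ will yield the approximation guarantee.

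The algorithm proceeds recursively on strongly connected subdigraphs. On an input $G[U]$, it first computes the nontrivial strongly connected components and handles each one independently; for a nontrivially strongly connected $G[U]$, it computes an (approximately) minimum weak balanced separator $S\subseteq U$, makes the vertices of $S$ into a root-chain of the elimination tree, and recurses on the induced subdigraphs of the strongly connected components of $G[U]-S$, attaching the resulting trees as children. Since each such component has at most $\lceil(|U|-|S|)/2\rceil$ vertices, the recursion depth is $O(\log n)$. If at every recursive call the separator we extract has size at most $\alpha\cdot\snum(G)$ for some factor $\alpha$, then the resulting elimination forest has height at most $\alpha\cdot\snum(G)\cdot O(\log n)$, and by the chain of inequalities in Theorem~\ref{thm:bounds} this is at most $\alpha\cdot\crank(G)\cdot O(\log n)$.

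The main obstacle is therefore the subroutine: we need a polynomial-time $\alpha$-approximation algorithm for the following problem — given a digraph $H$, find a minimum vertex set $S$ such that every strongly connected component of $H-S$ has at most $\lceil|V(H)-S|/2\rceil$ vertices. For the undirected vertex version, the best known guarantee is $\alpha=O(\sqrt{\log n})$ via the Feige--Hajiaghayi--Lee semidefinite relaxation, and the analogous $O(\sqrt{\log n})$ guarantee for directed balanced separators (in the vertex formulation) is obtained by combining the directed ARV-style embedding of Agarwal, Charikar, Makarychev, and Makarychev with the standard edge-to-vertex reduction (splitting each vertex into an in-copy and an out-copy). Plugging this $\alpha=O(\sqrt{\log n})$ factor into the analysis above gives overall height $O((\log n)^{3/2})\cdot\crank(G)$, which is the claimed approximation ratio.

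The only non-routine ingredients that remain are (i) verifying that each recursive call can indeed invoke the directed balanced-separator approximation on the subinstance $G[U]$ while the ``benchmark'' $\snum(G)$ remains a valid upper bound on the optimal separator of $G[U]$ (this is immediate from the definition of $\snum$ as a maximum over all vertex subsets), and (ii) checking that when the approximate separator is slightly less than perfectly balanced — the approximation algorithm typically returns a separator whose largest SCC has, say, at most $\frac{2}{3}|U|$ vertices — the same analysis goes through with an adjusted constant in the logarithm, which only affects the hidden constant in the $O((\log n)^{3/2})$ bound. Since the elimination forest can be translated back into the inductive definition of $\crank$ in linear time, the whole procedure runs in polynomial time, establishing the theorem.
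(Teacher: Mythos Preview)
Your proposal is correct and follows essentially the same divide-and-conquer strategy as the paper: recursively split along approximately minimum directed balanced separators, accumulate $O(\log n)$ levels, and bound each separator's size by $O(\sqrt{\log n})$ times a lower bound on $\crank(G)$. The only cosmetic difference is that the paper benchmarks the separator size against the directed pathwidth (invoking \cite[Corollary~2.25]{KKK11} directly, then using Lemma~\ref{thm:dpw-crank}), whereas you benchmark against $\snum(G)$ via ACMM plus the vertex-splitting reduction and then invoke Theorem~\ref{thm:bounds}; both routes land on the same $O((\log n)^{3/2})$ factor.
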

\begin{proof}
The following recursive procedure computes a directed 
elimination forest for the induced subgraph~$G[W]$,
where~$W \subset V$ is passed as parameter to the 
procedure. 

If~$G[W]$ consists of several strongly connected components, 
apply the procedure recursively to each of these; 
The union of these results gives a directed elimination 
forest for~$G[W]$. 

Otherwise, use the polynomial-time algorithm 
from~\cite[Corollary~2.25]{KKK11} 
to find a small vertex subset~$S \subseteq W$ in~$G[W]$ with the 
property that every strongly connected component of~$G[W]-S$ has at 
most~$\frac34|W|$ vertices.
Then pass the digraph~$G[W]-S$ as parameter to the recursive procedure.
Upon returning, the directed elimination forest~$F$ returned for~$G[W]-S$
is then extended, one by one, for each vertex~$s$ from~$S$. 

More precisely, put the elements of~$S$ in arbitrary order. 
Then for given~$s$ in~$S$, let~$X$ denote the set of vertices 
occurring before~$s$. Assuming we have already computed a directed
elimination forest for~$G[W \cup X]$, we now show how to extend 
this to a forest for~$G[W \cup X \cup s]$. Initially, the set~$X$ 
is empty, and we proceed for each~$s$ until $X=S$. 
Let $C_1,\ldots C_p$ denote those strongly connected components 
of the digraph~$G[W \cup X]$ 
for which $G[W \cup \{s\} \cup \bigcup_i C_i]$ is strongly connected,
and let $D_1,\ldots,D_r$ denote the remaining strongly 
connected components in~$G[W \cup X]$.
The elimination forest for~$G[W \cup X]$
contains an elimination tree for each~$G[C_i]$, and for each~$G[D_i]$.
Make up a new root~$(s, X \cup \{s\})$, and attach the directed elimination 
trees for the digraphs~$G[C_i]$ as children to that new root. This 
gives a directed elimination tree for~$G[W \cup \{s\}\cup \bigcup_i C_i]$. 
The union of this tree with the directed elimination trees 
for the strongly connected components $D_1, \ldots, D_r$ 
yields a directed elimination forest for~$G[W \cup X \cup s]$.
This completes the description of the subroutine for 
extending the forest.

The recursion terminates 
as soon as the size of~$W$ decreases below~$\beta (\log n)^{3/2}$.
In this case,
simply return an (arbitrary) directed elimination 
forest for~$G[W]$. 

Here, the number~$\beta$ is a fixed, suitably
chosen, constant coming from the analysis below. This completes 
the description of the algorithm.

It remains to analyze the above algorithm. It is readily checked 
that the algorithm returns an elimination forest for~$G$. 
For the performance guarantee, those recursive calls that simply 
partition the graph into strongly connected components do not add to the
height of the resulting forest; if we restrict our attention 
to these recursive calls that compute a suitable vertex 
subset~$S$, the depth of the recursion tree is~$O(\log n)$. 
At each such step,
we can find in polynomial time a suitable set~$S$ of size at most 
$\beta k \sqrt{\log n}$, where~$k$ is the 
directed pathwidth of~$G$, and~$\beta$ is some 
known constant (cf.~\cite[Corollary~2.25]{KKK11}).
The recursion terminates with an elimination forest of 
height at most~$\beta \cdot (\log n)^{3/2}$. Thus the overall 
height is bounded by 
\[\beta\cdot k \cdot\sqrt{\log n} \cdot O(\log n) + \beta \cdot (\log n)^{3/2} 
= O(k\cdot(\log n)^{3/2}),\] 
where~$k$ is the directed 
pathwidth of~$G$. By Lemma~\ref{thm:dpw-crank}, 
we have~$k \le \crank(G)$. In this way, we have 
a polynomial-time~$O((\log n)^{3/2})$-approximation 
for cycle rank.
\end{proof}

The above performance guarantee matches the best 
previous result known for the undirected 
case~\cite{ACMM05}. For other digraph 
complexity measures, such as D-width and 
directed pathwidth, approximation algorithms in a 
similar vein were recently given in~\cite{KKK11}.

\subsection{Exact Computation}

In certain circumstances, an 
approximation guarantee within a factor~$O((\log n)^{3/2})$
may not suffice. Thus we also take a look at exact algorithms 
for computing the cycle rank.

The na\"{\i}ve algorithm for determining cycle rank according to 
Definition~\ref{defn:cr} requires inspecting~$n!$ 
possibilities on a graph with~$n$ vertices, 
as witnessed by the complete graph~$K_n$. While one may not expect a
polynomial-time algorithm, we can still do much better:
\begin{theorem}
The cycle rank of an $n$-vertex digraph can be computed in 
time and space~$O^*(2^n)$. 
\end{theorem}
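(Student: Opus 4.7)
The plan is to implement the recursive definition of cycle rank (Definition~\ref{defn:cr}) as a subset dynamic program, storing a value $\crank(G[W])$ for every $W \subseteq V$. The subsets serve as the DP table, and the recursive structure of the definition guarantees that every sub-instance encountered when processing $W$ is itself an induced subgraph on a strictly smaller vertex set, so processing subsets in order of increasing cardinality ensures that all required values are already available.

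Concretely, for each $W \subseteq V$ I would first compute the strongly connected components of $G[W]$ (in time polynomial in $|W|$ using Tarjan's algorithm), and then branch on three cases following the definition: (i) if $G[W]$ is acyclic, set $\crank(G[W]) = 0$; (ii) if $G[W]$ is strongly connected and contains an edge, set
\[
\crank(G[W]) = 1 + \min_{v \in W} \crank(G[W \setminus \{v\}]),
\]
which is a minimum over $|W|$ already-computed entries; (iii) otherwise, take the maximum of $\crank(G[C])$ over the nontrivial strongly connected components $C$ of $G[W]$, each of which is a strict subset of $W$. Encoding $W$ as an $n$-bit mask makes all these table lookups $O(1)$ in the RAM model (or $O(n)$ with pointer bookkeeping, which is absorbed into the $O^*(\cdot)$ notation).

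For the complexity analysis, there are $2^n$ subsets and the work at each subset is polynomial in $n$: computing SCCs, iterating over at most $n$ vertices or at most $n$ components, and looking up at most $n$ table entries. This gives total time $2^n \cdot \operatorname{poly}(n) = O^*(2^n)$, and the table itself occupies $O^*(2^n)$ space, one entry per subset.

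There is no real obstacle here; the only point worth checking is that every recursive invocation lands on a strictly smaller subset so that the bottom-up order is well-defined. This is immediate in case (ii) since we remove a vertex, and in case (iii) because once $G[W]$ is not strongly connected each nontrivial strongly connected component is a proper subset of $W$. The acyclic base case absorbs the remaining possibilities, so the DP is complete and the stated bounds follow.
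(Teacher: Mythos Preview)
Your proof is correct and is the same idea as the paper's: a subset dynamic program implementing the recursion for cycle rank. The one difference worth noting is in the indexing. You tabulate $\crank(G[W])$ for \emph{every} $W\subseteq V$ and handle the three cases of Definition~\ref{defn:cr} directly; the paper instead uses the elimination-forest characterisation and tabulates $\crank(x,X)$ only for pairs with $X$ a nontrivial strongly connected subset and $x\in X$, giving running time $|\mathfrak S|\cdot n^{O(1)}$ where $\mathfrak S$ is the family of strongly connected subsets of~$V$. For the present statement both bounds collapse to $O^*(2^n)$, so your argument suffices; but the paper's finer parameterisation by $|\mathfrak S|$ is precisely what feeds into Lemma~\ref{lem:count-strong-subsets} and Theorem~\ref{thm:algos-crank}, where bounding $|\mathfrak S|$ below $2^n$ on digraphs of bounded outdegree yields the $O^*((2-\epsilon)^n)$ improvement. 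Your table, as written, has $2^n$ entries regardless of outdegree, so it would need the obvious modification (store only strongly connected subsets) to recover that refinement.
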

\begin{proof}
We show how the characterization of 
the cycle rank of a digraph $G=(V,E)$ in terms of
the directed elimination forests from Definition~\ref{defn:elimination-forest} 
can be turned into a dynamic 
programming scheme. We only consider the case~$G$ itself is nontrivially strongly connected---otherwise, we obtain the cycle 
rank by taking the minimum among the cycle ranks of the nontrivial strongly connected components of~$G$. 
For a nontrivial strongly connected subset of 
vertices~$X \subseteq V$ and a vertex~$x\in X$, let $\crank(x,X)$ denote 
the minimum height among all elimination forests for $G$ with root $(x,X)$.
Then $\crank(G) = \min_{v\in V} \crank(v,V)$, so it suffices to 
design an algorithm computing $\crank(v,V)$ for each $v\in V$.
By inspecting Definition~\ref{defn:elimination-forest}, we obtain 
the recurrence 
\begin{equation}\label{eqn:recurrence}
\crank(x,X) = \begin{cases} 1 & \mbox{ if } G[X]-x \mbox{ is acyclic }\\
1+\max_{Y} \min_{y \in Y} \crank(y,Y) & \mbox{ otherwise }\end{cases}
\end{equation}
Here~$Y$ runs over all nontrivial strongly connected components of $G[X]-x$ (of which there
can be at most $|X|-1$). Using the classic trick 
of memoization (see~\cite{KT05}),
this recurrence can be easily transformed into a dynamic programming scheme 
with memoization that runs in time $|\mathfrak{S}|\cdot n^{O(1)}$, 
where~$\mathfrak{S} \subseteq 2^{V}$ is the set of strongly connected subsets
of the digraph~$G$. 
\end{proof}

The reader is invited to try out the above algorithm 
for the digraph depicted in Figure~\ref{fig:digraph}. 
The bottleneck in the above algorithm is the requirement 
of computing and storing the cycle rank for all elements of $\mathfrak{S}$,
namely of the family of strongly connected subsets in the input digraph. 
For a complete digraph, we have $|\mathfrak{S}|=2^n$, but 
this bound can no longer be reached for digraphs of bounded 
maximum outdegree. For undirected graphs of maximum degree~$d$, 
a nontrivial bound on the number of (weakly) connected 
subsets was established recently in~\cite{BHKK08}. As it turns out,
their bound allows the following generalization 
to the theory of digraphs, in that the original proof 
carries over with obvious modifications:

\begin{lemma}\label{lem:count-strong-subsets}
Let~$G$ be a digraph of order $n$ with maximum outdegree at most~$d$.
Then the number of strongly connected subsets of~$V$ is 
at most $\gamma^n + n$, with $\gamma = (2^{d+1}-1)^{1/(d+1)}$.
In particular, for $d=2$, we have $\gamma \doteq 1.9129$.\qed
\end{lemma}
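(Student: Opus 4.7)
The plan is to follow the counting argument from~\cite{BHKK08}, which establishes an analogous bound for connected induced subgraphs in undirected graphs of maximum degree at most~$d$. The adaptation to the digraph setting is obtained by replacing each closed neighborhood~$N[v]$ throughout by the closed out-neighborhood $N^+[v] = \{v\} \cup \{u \mid (v,u) \in E\}$, which likewise has cardinality at most~$d+1$. The additive term~$n$ in the bound accounts for the~$n$ singleton subsets, each of which is (at least trivially) strongly connected; hence it suffices to bound the number of strongly connected subsets of size at least~$2$ by~$\gamma^n$.

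I would then fix an arbitrary total order on~$V$ and count, for each vertex~$v \in V$ in turn, the strongly connected subsets~$S$ whose minimum element (in this order) is~$v$. Mirroring the BHKK argument, the count for a given~$v$ is organized by the possible traces $A = S \cap N^+[v]$: there are at most $2^{d+1}-1$ such nonempty traces containing~$v$, and for each fixed trace, the continuation of~$S$ into the subdigraph~$G - v$ is bounded recursively. The amortization yielding the factor $\gamma = (2^{d+1}-1)^{1/(d+1)}$ per vertex is purely arithmetical and carries over verbatim from the undirected proof. Strong connectivity enters only to ensure that, when $|S| \ge 2$, the trace~$A$ must contain at least one out-neighbor of~$v$, for otherwise~$v$ could not reach the remainder of~$S$ inside $G[S]$; this is precisely the non-triviality condition that already appears in the undirected analysis.

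The main obstacle I would expect is verifying that the recursive step, which in the undirected proof partitions $S \setminus \{v\}$ into connected components that are then counted independently, remains valid in the digraph setting. Concretely, after removing~$v$, the strongly connected components of $G[S]-v$ must be enumerable within the smaller digraph $G-v$ using the same inductive bound, and one has to confirm that no overcounting arises from the interplay between these components and the chosen trace~$A$. This turns out to be routine: strong connectivity is a local condition preserved under restriction to $V\setminus\{v\}$, and the counting decouples across components exactly as in the undirected case, so the recursion closes on the desired bound $\gamma^n$.
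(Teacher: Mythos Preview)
Your proposal is essentially the same as the paper's treatment: the paper gives no proof of this lemma at all, merely asserting that the undirected bound from~\cite{BHKK08} ``carries over with obvious modifications'' to digraphs, and your plan---replace the closed neighborhood $N[v]$ by the closed out-neighborhood $N^+[v]$ and use strong connectivity to guarantee a nontrivial intersection with $N^+[v]$---is exactly that modification. Since the paper defers entirely to~\cite{BHKK08} for the actual argument, your high-level plan matches it.

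One caveat worth flagging: your sketch of the BHKK recursion (fix the minimum vertex~$v$ of~$S$, branch on the trace $A = S \cap N^+[v]$, then recurse on $G-v$) is not quite the structure of the original argument, and as written the recursive step is loose---after deleting~$v$, the set $S\setminus\{v\}$ need not be strongly connected, so you cannot directly feed it back into the same count, and ``decoupling across components'' does not obviously close the recursion without overcounting. The actual BHKK argument branches on the full closed neighborhood $N[v]$ (not just on~$v$) and uses the constraint that the trace cannot equal $\{v\}$ alone, which is where the factor $2^{d+1}-1$ (rather than $2^d-1$) arises and where the amortization over $d+1$ vertices per branching step comes from. This is a detail of the borrowed proof rather than a divergence from the paper, but if you intend to spell out the argument you should follow the BHKK branching more closely.
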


On digraphs of bounded outdegree, we thus obtain the following improved 
bound on the running time of the above algorithm:

\begin{theorem}\label{thm:algos-crank}
Let~$G$ be a digraph of order~$n$ with constant 
maximum outdegree~$d$.
Then the cycle rank of~$G$ can be computed in time and
space $O^*\left((2-\epsilon)^n\right)$, where $\epsilon$ 
is a constant depending on~$d$. In particular, for 
digraphs of maximum outdegree~$2$, 
the cycle rank can be computed in time and space~$O^*(1.9129^n)$.\qed
\end{theorem}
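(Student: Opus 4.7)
The plan is a direct assembly: reuse the dynamic-programming scheme developed for the preceding theorem, and replace the trivial bound $|\mathfrak{S}| \le 2^n$ by the sharper bound from Lemma~\ref{lem:count-strong-subsets}. Since that scheme runs in time and space $|\mathfrak{S}| \cdot n^{O(1)}$, and since $\gamma = (2^{d+1}-1)^{1/(d+1)} < 2$ for every finite $d$, setting $\epsilon := 2 - \gamma$ immediately yields the $O^*\bigl((2-\epsilon)^n\bigr)$ bound. For the specific case $d=2$, one plugs in to obtain $\gamma = 7^{1/3} \doteq 1.9129$.

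The delicate point — and what I would expect to be the main obstacle — is ensuring that the running time really is proportional to $|\mathfrak{S}|$ rather than to $2^n$. Naively iterating over all subsets of $V$ and testing strong connectivity on each would already cost $\Theta(2^n)$ regardless of $|\mathfrak{S}|$ and would nullify the improvement. To avoid this, I would enumerate $\mathfrak{S}$ directly, in order of increasing cardinality, by a branching procedure that grows candidate subsets one vertex at a time and prunes a branch as soon as strong connectivity cannot be restored by further additions within the bounded outdegree budget. Arranging the branching in the same spirit as the argument underlying Lemma~\ref{lem:count-strong-subsets} keeps the total work proportional to $|\mathfrak{S}|$ up to polynomial factors.

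For each strongly connected $X$ produced by the enumerator, and for each choice of root $x \in X$, the nontrivial strongly connected components of $G[X]-x$ are computed in polynomial time, and the recurrence~\eqref{eqn:recurrence} is then evaluated by looking up $\crank(y,Y)$ for one representative $y$ of each such component $Y$. Every such $Y$ has strictly smaller cardinality than $X$, hence has already been processed. Indexing the memoization table by the characteristic vector of the subset, for instance via a trie or a hash table, keeps each lookup and insertion at cost $n^{O(1)}$. Putting these pieces together, both time and space are $|\mathfrak{S}| \cdot n^{O(1)} = O^*\bigl((2-\epsilon)^n\bigr)$, which specializes to $O^*(1.9129^n)$ for $d=2$, as claimed.
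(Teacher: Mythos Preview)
Your proposal is correct and matches the paper's approach exactly: the paper treats this theorem as an immediate corollary (note the \qed\ after the statement, with no separate proof) of the preceding $O^*(2^n)$ dynamic-programming algorithm combined with the bound $|\mathfrak{S}| \le \gamma^n + n$ from Lemma~\ref{lem:count-strong-subsets}. You in fact go beyond what the paper spells out by worrying about how to enumerate $\mathfrak{S}$ without a $2^n$ overhead---a point the paper simply absorbs into the phrase ``runs in time $|\mathfrak{S}|\cdot n^{O(1)}$'' in the proof of the preceding theorem.
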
  

It seems that Lemma~\ref{lem:count-strong-subsets} has a 
host of algorithmic consequences. For illustration, recall 
that a vertex subset $S \subseteq V$ of a digraph $G$ is a 
{\em directed feedback vertex set}, if removing~$S$ 
from~$G$ leaves an acyclic digraph. Off the cuff, 
we can devise an exact algorithm for minimum 
directed feedback vertex set on sparse digraphs.
\begin{theorem}\label{thm:algo-dfvs}
Let~$G$ be a digraph of order~$n$ with constant 
maximum outdegree~$d$.
Then a minimum directed feedback vertex set of~$G$ can be computed in time and
space $O^*\left((2-\epsilon)^n\right)$, where $\epsilon$ 
is a constant depending on~$d$. In particular, for 
digraphs of maximum outdegree~$2$, 
a minimum directed feedback vertex set 
can be computed in time and space~$O^*(1.9129^n)$.\qed
\end{theorem}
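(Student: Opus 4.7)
The plan is to mirror the memoized dynamic programming used above for cycle rank, but now tailored to directed feedback vertex set, so that both running time and space are again controlled by the number of strongly connected subsets and hence by Lemma~\ref{lem:count-strong-subsets}. First, I would observe that the minimum directed feedback vertex set of any digraph $H$ is the disjoint union of minimum directed feedback vertex sets of its nontrivial strongly connected components, since trivial components contain no cycle and cycles never cross component boundaries. It therefore suffices to compute $\mathrm{dfvs}(X)$, the size of a minimum directed feedback vertex set of $G[X]$, for each nontrivially strongly connected $X \subseteq V$; the answer for $G$ is obtained by summing $\mathrm{dfvs}(\cdot)$ over the nontrivial strongly connected components of $G$.

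For such an $X$, every feedback vertex set of $G[X]$ meets $X$ in at least one vertex, since $G[X]$ contains a cycle. Fixing such a vertex $v \in X$ to include in the solution reduces the task to solving directed feedback vertex set on $G[X] - v$, which in turn decomposes over its nontrivial strongly connected components. This yields the recurrence
\[
\mathrm{dfvs}(X) \;=\; 1 + \min_{v \in X}\; \sum_{Y} \mathrm{dfvs}(Y),
\]
where $Y$ ranges over the nontrivial strongly connected components of $G[X] - v$. Each such $Y$ is itself a strongly connected subset of $G$, so it corresponds to a legitimate table entry, and proceeding by induction on $|X|$ the recurrence is well-defined.

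Applying memoization with one entry per strongly connected subset, each table entry requires polynomial work (enumerating $v \in X$, computing the strongly connected components of $G[X] - v$ via Tarjan's algorithm, and summing stored values). By Lemma~\ref{lem:count-strong-subsets}, the total number of table entries on a digraph of maximum outdegree $d$ is at most $\gamma^n + n$ with $\gamma = (2^{d+1} - 1)^{1/(d+1)} < 2$, giving time and space $O^*(\gamma^n) = O^*((2 - \epsilon)^n)$, which specializes to $O^*(1.9129^n)$ when $d = 2$. The main operational obstacle will be to enumerate the strongly connected subsets efficiently rather than iterate over all of $2^V$; I would reuse the constructive enumeration scheme underlying the proof of Lemma~\ref{lem:count-strong-subsets}, as was done implicitly for Theorem~\ref{thm:algos-crank}, filling the table in order of increasing $|X|$.
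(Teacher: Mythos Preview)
Your proof is correct, but it takes a genuinely different route from the paper's. The paper does \emph{not} set up a dynamic program; instead it invokes a known polynomial-delay enumeration algorithm for all minimal directed feedback vertex sets (equivalently, all maximal acyclic subsets), and then bounds the number of such sets combinatorially: every maximal acyclic subset arises as $S\setminus\{v\}$ for some strongly connected subset $S$ and some $v\in S$, so there are at most $n$ times as many maximal acyclic subsets as there are strongly connected subsets, and Lemma~\ref{lem:count-strong-subsets} finishes. Your approach, by contrast, mirrors the cycle-rank dynamic program directly, with the recurrence $\mathrm{dfvs}(X)=1+\min_{v\in X}\sum_{Y}\mathrm{dfvs}(Y)$ over nontrivially strongly connected $X$.

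Each route has its merits. Yours is more self-contained: it needs no external enumeration result, and it reuses verbatim the infrastructure already built for Theorem~\ref{thm:algos-crank}, including the table indexed by strongly connected subsets. The paper's argument, on the other hand, yields as a by-product a combinatorial upper bound on the \emph{number} of minimal directed feedback vertex sets in a bounded-outdegree digraph, which is of independent interest beyond the algorithmic statement. Both arrive at the same $O^*(\gamma^n)$ time and space bound via Lemma~\ref{lem:count-strong-subsets}.
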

\begin{proof}
By duality, the task of enumerating all minimal directed 
feedback vertex sets is equivalent to enumerating all maximal
acyclic subsets, that is, maximal vertex subsets that induce 
a directed acyclic graph. Here, ``minimal'' and ``maximal'' are 
meant with respect to set inclusion. 

Since there is an algorithm enumerating all 
minimal directed feedback vertex sets 
(or, equivalently, all maximal acyclic subsets) 
with polynomial delay~\cite{SchwikowskiSpeckenmeyer02}, it only 
remains to derive a 
combinatorial bound on the number of such 
sets. 
A strongly connected subset~$S \subset V$ in $G$ is called a 
{\em minimal strongly connected subset}, if~$S$ contains 
a vertex~$v$ such that $S-v$ is an acyclic subset. 
Clearly, in this case, $S-v$ is a maximal acyclic 
subset. Thus, each minimal strongly connected 
subset~$S$ will give rise to at most $|S| \le n$ 
maximal acyclic subsets; and each maximal acyclic subset can
be obtained in this way from a minimal strongly connected subset.
Thus the total number of maximal acyclic subsets in~$G$ 
is at most $n$ times the number of (minimal) strongly connected 
subsets in~$G$. The result now follows with 
Lemma~\ref{lem:count-strong-subsets}.
\end{proof}
The above running time looks reasonable if we consider 
the following facts: First, even on digraphs of maximum outdegree at 
most~$2$, the problem is $\NP$-complete~\cite[Problem~GT7]{GJ79}.
Second, the fastest known exact algorithm for digraphs of unbounded
outdegree~\cite{Razgon07} runs in time~$O^*(1.9977^n)$. Third, 
easy examples show that digraphs with outdegree~$2$ can have at 
least~$1.4142^{n}$ minimal directed feedback vertex sets~\cite{SchwikowskiSpeckenmeyer02}.

\section{Star Height of Regular Expressions}\label{sec:star-height}
As it turns out, the cycle rank of digraphs is intimately 
related to structural and descriptional complexity aspects of 
regular expressions. 
The star height of a regular language~$L$, denoted by $\starheight(L)$, 
is defined as the minimum nesting depth of stars in 
any regular expression describing~$L$. The following relation between star
height and the cycle rank of nondeterministic finite automata (NFAs)
was shown already in the seminal paper on star height.
\begin{theorem}[Eggan's Theorem]
Let~$L$ be a regular language. Then 
$$\starheight(L)= \min \{\,\crank(A) \mid A \mbox{ an NFA accepting }L\,\}$$
\end{theorem}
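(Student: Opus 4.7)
The plan is to prove the two inequalities separately, each by a natural translation between the combinatorial object (directed elimination forest) and the syntactic object (regular expression).

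For the inequality $\starheight(L) \le \min\{\crank(A) \mid A \text{ an NFA for } L\}$, I would take any NFA $A$ accepting $L$ with $\crank(A) = k$, fix an optimal directed elimination forest $F$ of height $k$ for its underlying digraph, and construct a regexp of star height at most $k$ by induction on $F$. The construction is a reorganized version of the classical Kleene/McNaughton--Yamada state elimination: for each node $(x,X)$, and each pair $u,v \in X$, build an expression $r_{u,v}^X$ describing the words labeling paths in $G[X]$ from $u$ to $v$. The children $(y_i, Y_i)$ of $(x,X)$ are, by Definition~\ref{defn:elimination-forest}(d), exactly the nontrivial strongly connected components of $G[X]-x$, so inductively we already have expressions of star height at most $k-1$ for paths inside each $G[Y_i]$; paths in $G[X]-x$ that pass between distinct SCCs or through trivial SCCs are star-free combinations of those. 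Kleene's identity then expresses $r_{u,v}^X$ using a single additional star at $x$, raising the star height by exactly one. Assembling the result at the roots of $F$ and prepending/appending initial and final transitions yields a regexp for $L(A)$ of star height $k$.

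For the converse $\starheight(L) \ge \min \crank$, induct on the structure of a regular expression $r$ of star height $h$ and build an NFA $A_r$ with $\crank(A_r) \le h$. Using, say, Thompson's construction, the atomic cases $\emptyset, \lambda, a$ give acyclic NFAs, so cycle rank $0$. For $r_1 + r_2$ and $r_1 \cdot r_2$, the construction keeps the two subautomata disjoint in their strongly connected structure (the newly added $\lambda$-transitions are one-way between disjoint pieces), so the set of nontrivial SCCs is the disjoint union of those of $A_{r_1}$ and $A_{r_2}$ and $\crank(A_r) = \max(\crank(A_{r_1}), \crank(A_{r_2})) \le \max(h_1, h_2) = h$. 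For $r_1^*$, Thompson's construction adds a back-edge that potentially merges the SCCs of $A_{r_1}$ with a fresh vertex $v$ into a single SCC; removing $v$ in the elimination forest restores the subautomaton $A_{r_1}$, so by induction the cycle rank is at most $1 + \crank(A_{r_1}) \le 1 + h_1 = h$.

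The main obstacle will be the $(\le)$ direction: one must show that the inductive construction introduces \emph{exactly} one star per forest level rather than letting stars accumulate across sibling SCCs (a naive bookkeeping would produce an expression of star height proportional to $k$ times the number of SCCs handled at a level, or involve unwanted extra stars). The key invariant to maintain is that within each node $(x,X)$ the only star at the current level is the one introduced by Kleene's identity for the new loops through $x$, while all combinators used to glue together the contributions of the children are star-free. A secondary subtlety is the choice of NFA construction in the $(\ge)$ direction: one needs a construction for which the SCC structure of the NFA mirrors the star-nesting structure of the expression, and Thompson's construction has precisely this property, whereas other standard constructions may require a small additional argument.
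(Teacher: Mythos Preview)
The paper does not give its own proof of this statement: Eggan's Theorem is quoted as a classical result from~\cite{Eggan63} and used as a black box. So there is nothing in the paper to compare your argument against.

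That said, your two-direction sketch is the standard proof and is essentially correct. Two minor points are worth tightening. First, in the $(\ge)$ direction you invoke Thompson's construction, which produces $\lambda$-transitions, while the paper's NFA model has $\delta:Q\times\Sigma\to 2^Q$ and thus no $\lambda$-moves; you should either argue that $\lambda$-elimination does not increase cycle rank (it does not, since it cannot create new nontrivial strongly connected subsets among the original states), or switch to a $\lambda$-free inductive construction. Second, in the star case of Thompson's construction two fresh vertices are added, not one, and removing a single vertex need not disconnect all new cycles; you want a variant with a single fresh vertex serving as both initial and final state (with the back-edge through it), so that deleting that one vertex indeed restores $A_{r_1}$ and yields $\crank(A_{r^*}) \le 1+\crank(A_{r_1})$. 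With those adjustments your proof goes through and matches Eggan's original argument.
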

Here, $\crank(A)$ denotes the cycle rank of the digraph underlying 
the transition structure of~$A$. 

\medskip

As an aside, Eggan's Theorem was recently used 
to obtain a powerful lower bound technique for the minimum required 
length of regular expressions for a given regular language:
\begin{lemma}[Star Height Lemma,~\cite{GH08a}]
Let~$L$ be a regular language. If $L$ admits a regular expression 
of length~$n$, then $n \ge 2^{\Omega\left(\starheight(L)\right)}$.  
\end{lemma}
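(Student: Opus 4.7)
The plan is to combine Eggan's Theorem with the separator-based bound on cycle rank in Corollary~\ref{cor:cr-separator}. Given a regular expression $r$ of length $n$ for $L$, I would first convert $r$ into an equivalent NFA $A$ with $O(n)$ states, using a standard construction such as Thompson's. The key structural property I need is that $A$ faithfully reflects the hierarchical parse-tree structure of $r$: each subexpression $s$ of $r$ is associated with a block of states of $A$, and this block interfaces with the remainder of $A$ only through a constant number of dedicated entry/exit states.

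Second, I would leverage this hierarchical structure to show that $\snum(A) = O(1)$, independently of $n$. For an arbitrary vertex subset $U \subseteq V(A)$, a weighted centroid argument on the parse tree of $r$---where each parse-tree node is weighted by the number of states it owns that lie in $U$---identifies a subexpression $s$ whose constant-size interface, once deleted, disconnects the states of $U$ inside $s$'s block from those outside. Each resulting piece carries at most a constant fraction of $|U|$, and since every strongly connected component of $G[U \setminus S]$ must lie entirely within a single piece, this yields a weak balanced separator of constant size for $U$.

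Third, Corollary~\ref{cor:cr-separator} applied to $A$ then gives $\crank(A) \le O(\log n)$, and Eggan's Theorem yields $\starheight(L) \le \crank(A) = O(\log n)$, which is equivalent to the claim $n \ge 2^{\Omega(\starheight(L))}$.

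The main obstacle is the second step. While the hierarchical nature of Thompson's construction makes it intuitive that $A$ has small directed pathwidth, one must carefully define the block and the interface of a subexpression and verify uniformly, across the three basic expression-building operations (union, concatenation, star), that the centroid-based cut really is a weak balanced separator. A minor subtlety is that the definition of weak balanced separator used in this paper requires a splitting ratio of $1/2$, whereas a single tree centroid only guarantees $2/3$; this can be remedied either by iterating the argument or by removing a slightly larger---but still constant---number of vertices, and affects only the hidden constants.
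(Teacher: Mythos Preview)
Your proposal is correct and follows the same approach sketched in the paper's gist: convert the expression to an NFA of size $O(n)$ whose transition structure is ``poorly connected,'' bound its cycle rank logarithmically via the separator-based inequality, and conclude with Eggan's Theorem. Your plan simply makes explicit what the paper leaves implicit---namely, that ``poorly connected'' is formalized as $\snum(A)=O(1)$ via the parse-tree centroid argument and then fed into Corollary~\ref{cor:cr-separator}.
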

The gist of the proof is that each regular expression can be converted 
into an equivalent NFA of comparable size, 
but whose transition structure is only poorly connected. 
The result then follows using Eggan's Theorem. In~\cite{GH08a}, this
method was used to prove the unexpected result
that complementing regular languages can cause 
a doubly-exponential blow-up in the minimum required regular expression length.

\medskip

Of course, the minimum in Eggan's Theorem is taken over infinitely many NFAs, 
and indeed for more than two decades, it 
was unknown whether there exists an algorithm deciding 
the {\bf STAR HEIGHT} problem: 
given a deterministic finite automaton (DFA)~$A$ and an integer~$k$, 
determine whether the star height of $L(A)$ is at most~$k$, a question raised
in~\cite{Eggan63}. Although the problem is now known to be decidable,
the best known upper bound\footnote{The noted upper bound holds more 
generally for a given NFA if also an NFA accepting 
the complement language is provided as part of the input.
Recall that complementing a DFA does not affect its size, whereas 
complementing an NFA can cause an exponential blow-up 
in the required number of states~\cite{HK03}.
} 
to date is $\EXPSPACE$~\cite{Kirsten10}. 
To the best of our knowledge, nontrivial lower bounds 
are known {\em only} for the case where the input is specified 
succinctly, as an NFA: Determining the star height of a language 
specified as an NFA is $\PSPACE$-hard~\cite{HR78}.
Yet, as illustrated in~\cite{HR78}, a large multitude of natural questions 
about the language accepted by a given NFA is $\PSPACE$-hard, whereas
the corresponding question often become computationally easy 
if a DFA is given. Therefore,
such a hardness result renders more service to understanding the effect of 
succinct input descriptions than 
to understanding the computational nature of the core problem at hand. 
That is why we deliberately stick to the convention to specify the 
input as a DFA.

Here we settle the complexity of the star height problem 
for a subclass of the regular languages, namely the bideterministic languages.
The decision problem {\bf BIDETERMINISTIC STAR HEIGHT} is defined as follows:
Given a bideterministic finite automaton~$A$ and an integer~$k$, 
decide whether the star height of~$L(A)$ is at most~$k$.

Bideterministic finite automata have the special property that
the star height problem of bideterministic languages
boils down to determining the cycle rank of a digraph. 
The following theorem is proved in~\cite{McNaughton67}:
\begin{theorem}[McNaughton's Theorem]\label{thm:McNaughton}
Let~$L$ be a bideterministic language, and let~$A$ be the minimal trim
(i.e., without a dead state) DFA accepting~$L$.
Then $\starheight(L) = \crank(A)$. 
\end{theorem}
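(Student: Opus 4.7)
The plan is to reduce to Eggan's Theorem, which gives $\starheight(L) = \min\{\crank(B) \mid B \text{ is an NFA accepting } L\}$. The upper bound $\starheight(L) \le \crank(A)$ is then immediate, since $A$ itself is an NFA accepting $L$. The substance of the claim is the reverse inequality, $\crank(B) \ge \crank(A)$ for every NFA $B$ that accepts $L$. The strategy is to construct a canonical simulation $\phi: B \to A$ and argue that bideterminism of $A$ forces this map to preserve cycle rank.

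The map $\phi : V(B) \to V(A)$ will be defined by $\phi(q) = \delta_A^{\ast}(q_0^A, u)$, where $u$ is any word from the start state of $B$ to $q$. We may assume $B$ is trim, as trimming does not increase cycle rank. Well-definedness of $\phi$ is the first place that bideterminism of $A$ is essential: if $u_1$ and $u_2$ both reach $q$ in $B$ and $v$ is any word accepted from $q$ (such $v$ exists by trimness), then $u_1 v, u_2 v \in L$, so $\delta_A^{\ast}(q_0^A, u_1)$ and $\delta_A^{\ast}(q_0^A, u_2)$ both reach the unique accepting state of $A$ on reading $v$, and reverse determinism forces them to coincide. I would then check that $\phi$ is transition-preserving, surjects onto $V(A)$ (exploiting trimness of $A$), and that every transition of $A$ lifts to at least one transition of $B$ through $\phi$.

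The main obstacle is to show $\crank(A) \le \crank(B)$ from these properties. A general surjective, transition-preserving, edge-lifting map between digraphs does not preserve cycle rank: one can collapse two disjoint parallel edges to a single loop, sending an acyclic source onto a cyclic target. Hence bideterminism must be invoked a second time, in a more subtle way. My route is induction on $k = \crank(B)$. The base case $k = 0$ is immediate: an acyclic $B$ accepts a finite language, so the minimal trim DFA $A$ is itself acyclic. For the inductive step, I would pick a vertex $x$ realising the minimum in Definition~\ref{defn:cr}, so that every strongly connected component of $B - x$ has cycle rank at most $k-1$, and remove the entire $\phi$-fibre $\phi^{-1}(\phi(x))$. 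The resulting induced subdigraph of $B$ still has cycle rank at most $k-1$, and the restriction of $\phi$ to it is a canonical simulation onto $A - \phi(x)$. The crucial and most delicate step will be to verify that the bideterminism-induced rigidity of the fibres $\phi^{-1}(p)$ persists under this restriction, so that an appropriately formulated inductive hypothesis applies componentwise to the strongly connected components of $A - \phi(x)$; this yields $\crank(A - \phi(x)) \le k-1$, and hence $\crank(A) \le k$, as required.
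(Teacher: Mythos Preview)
The paper does not actually prove McNaughton's Theorem; it is quoted from~\cite{McNaughton67} as a known result. So there is no ``paper's own proof'' to compare against, and your outline must be assessed on its own merits.

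Your reduction to Eggan's Theorem and the construction of the canonical map $\phi:B\to A$ are sound, and your well-definedness argument using reverse determinism is correct. The genuine gap is precisely where you flag it: the inductive step. You defer the work to ``an appropriately formulated inductive hypothesis,'' but you never state what that hypothesis is, and this is not a routine matter of bookkeeping. The obvious abstractions fail. For instance, the property
\[
(\mathrm{P})\quad\text{every nontrivial strongly connected }T\subseteq V(A)\text{ is }\phi(S)\text{ for some nontrivial strongly connected }S\subseteq V(B)
\]
together with $\phi$ being a surjective digraph morphism does \emph{not} imply $\crank(A)\le\crank(B)$. A counterexample: let $A$ have vertices $p_1,p_2$ with edges $p_1\leftrightarrow p_2$ and loops at both (so $\crank(A)=2$); let $B$ be the disjoint union of a $2$-cycle $a_1\leftrightarrow a_2$ and two isolated loops $b_1,b_2$; set $\phi(a_i)=p_i$ and $\phi(b_i)=p_i$. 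Then $\phi$ is a surjective morphism satisfying $(\mathrm{P})$, yet $\crank(B)=1<2=\crank(A)$. One can even label the edges so that $A$ is bideterministic and $\phi$ is a labelled morphism; what fails is that $B$ does not accept $L(A)$, so this $\phi$ cannot arise as your canonical simulation.

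This shows that the induction cannot be carried out at the level of unlabelled digraphs and must retain the alphabet structure and the bideterminism of $A$ throughout. McNaughton's original argument does exactly that: the labelling is used inside the induction, not merely to set up $\phi$. Your proposal correctly locates the difficulty but does not supply the missing invariant, so as it stands it is an incomplete sketch rather than a proof.
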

On the positive side, the algorithmic results from the previous section 
easily translate to a formal language setup 
using McNaughton's Theorem. For approximating {\bf STAR HEIGHT}, 
we have to resort to Eggan's Theorem, 
giving only an $O(n)$-approximation. In the 
bideterministic case, we have the following 
counterpart to Theorem~\ref{thm:approx-cr}:
\begin{theorem}\label{thm:approx-sh}
The {\bf BIDETERMINISTIC STAR HEIGHT} problem admits a polynomial-time 
approximation within a factor of~$O((\log n)^{3/2})$.\qed
\end{theorem}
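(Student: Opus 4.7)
The plan is to reduce the problem directly to the cycle rank approximation result (Theorem~\ref{thm:approx-cr}) by means of McNaughton's Theorem. Given a bideterministic finite automaton $A$ with $n$ states, the first step is to compute, in polynomial time, the minimal trim DFA $A'$ accepting $L(A)$. Removing the dead state (if any) is a single scan of the transition graph, and DFA minimization can be performed in time $O(n\log n)$ by Hopcroft's algorithm. The quotient automaton inherits bideterminism from $A$, so $A'$ is the minimal trim DFA for the bideterministic language $L(A)$, and it has at most $n$ states.

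Next, I would run the polynomial-time $O((\log n)^{3/2})$-approximation algorithm of Theorem~\ref{thm:approx-cr} on the transition digraph of $A'$ (ignoring edge labels). This produces a directed elimination forest for $A'$ of height at most $c\cdot (\log n)^{3/2}\cdot \crank(A')$ for some absolute constant $c$. By McNaughton's Theorem (Theorem~\ref{thm:McNaughton}), we have $\crank(A') = \starheight(L(A))$, so the height of the returned forest is at most $c\cdot(\log n)^{3/2}\cdot \starheight(L(A))$. Reporting this height as the output yields the desired approximation ratio.

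Correctness follows from combining three ingredients: polynomial-time computability of the minimal trim DFA, invariance of the language under minimization, and the exact equality $\starheight(L(A)) = \crank(A')$ furnished by McNaughton's Theorem. The approximation guarantee is inherited verbatim from Theorem~\ref{thm:approx-cr}, since $A'$ has at most $n$ states and so $\log n' \le \log n$.

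There is essentially no hard part here; the only point that deserves a sentence of justification is that the minimal trim DFA of a bideterministic language is itself bideterministic, so that we are in the hypothesis of McNaughton's Theorem. This follows because merging Nerode-equivalent states of a bideterministic DFA preserves the property that reversing transitions and swapping initial and final states yields a deterministic automaton. Once this is noted, the theorem is a direct composition of the two cited results.
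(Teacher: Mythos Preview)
Your proposal is correct and follows essentially the same route as the paper, which treats the theorem as an immediate corollary of Theorem~\ref{thm:approx-cr} via McNaughton's Theorem and gives no separate proof. One minor simplification: since a trim bideterministic DFA is automatically minimal (a standard fact, cf.~Angluin~\cite{Angluin82}), you can replace the Hopcroft minimization step by merely trimming~$A$, which already yields the minimal trim DFA required by Theorem~\ref{thm:McNaughton}.
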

We also have a natural counterpart to Theorem~\ref{thm:algos-crank}:
\begin{theorem}\label{thm:algos-sh}
Let~$A$ be a bideterministic finite automaton with~$n$ states 
over an input alphabet of size~$k$.
Then the star height of~$L(A)$ can be computed exactly, in time and
space $O^*\left((2-\epsilon)^n\right)$, where $\epsilon$ 
is a constant depending on~$k$. In particular, for the case of 
binary input alphabets, the star height can be 
computed in time and space~$O^*(1.9129^n)$.\qed
\end{theorem}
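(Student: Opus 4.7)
The plan is to reduce the bideterministic star height problem to the cycle rank problem on a digraph of small maximum outdegree, and then invoke Theorem~\ref{thm:algos-crank}. First, given the input bideterministic automaton $A$, I would preprocess it in polynomial time into the minimal trim DFA $A'$ accepting $L(A)$: remove the dead state (if any), then apply Hopcroft-style minimization. Standard arguments show that the minimization of a bideterministic DFA remains bideterministic (since reversing and determinizing commutes with minimization up to isomorphism), so the hypothesis of McNaughton's Theorem is met, giving $\starheight(L(A)) = \crank(A')$.

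Next, I would observe the crucial structural fact that links the alphabet size to the digraph parameters: the underlying transition digraph $G$ of any DFA over an alphabet of size $k$ has maximum outdegree at most $k$, because each state emits at most one transition per letter (multiple transitions labeled by different letters between the same pair of states collapse to a single directed edge in $G$). In particular, for $k=2$ we get maximum outdegree at most $2$.

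Then I would feed $G$ into the exact algorithm of Theorem~\ref{thm:algos-crank}. This yields $\crank(A')$, which by McNaughton's Theorem equals $\starheight(L(A))$. The running time is $O^*((2-\epsilon)^n)$ with $\epsilon$ depending on $k$; specializing to $k=2$ gives outdegree at most $2$ and hence the bound $O^*(1.9129^n)$ via Lemma~\ref{lem:count-strong-subsets}.

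The only delicate point, and thus the main obstacle, is ensuring that the preprocessing truly produces the minimal \emph{trim} DFA to which McNaughton's Theorem applies, without blowing up the state count: the input automaton is already deterministic and bideterministic, so DFA minimization runs in polynomial time, and discarding the dead state can only decrease $n$. Everything else is routine invocation of the earlier results.
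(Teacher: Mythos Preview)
Your proposal is correct and matches the paper's (implicit) argument: the theorem is stated there with a bare \qed, since it follows immediately from McNaughton's Theorem together with Theorem~\ref{thm:algos-crank} via the observation that the transition digraph of a DFA over a $k$-letter alphabet has outdegree at most~$k$. Your only unnecessary worry is the minimization step---a trim bideterministic automaton is already the minimal DFA (an immediate consequence of Brzozowski's algorithm), so trimming alone suffices and cannot increase~$n$.
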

On the negative size, also the 
$\NP$-hardness result for {\bf CYCLE RANK} translates to 
its language-theoretic counterpart.
Moreover, we show that already 
the case of binary input alphabets is that hard:
\begin{theorem}\label{thm:sh-npc}
The {\bf BIDETERMINISTIC STAR HEIGHT} problem is $\NP$-complete, 
and this still holds when restricted to bideterministic 
automata over binary input alphabets.
\end{theorem}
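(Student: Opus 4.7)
The plan is to establish membership in NP via McNaughton's Theorem, and then to reduce CYCLE RANK on symmetric loop-free digraphs --- which is NP-complete by Theorem~\ref{thm:cr-npc} --- to BIDETERMINISTIC STAR HEIGHT over the binary alphabet. For membership, given a bideterministic DFA $A$ and integer $k$, we compute the minimal trim DFA $A'$ of $L(A)$ in polynomial time by standard minimisation followed by trimming; since every trim bideterministic DFA is automatically minimal (a folklore consequence of reverse determinism: two equivalent states would be simultaneously back-reachable from a common word in the reverse DFA, which is deterministic with unique initial state), $A'$ is again a trim bideterministic DFA. Theorem~\ref{thm:McNaughton} then yields $\starheight(L(A))\le k$ if and only if $\crank(A')\le k$, and the latter is in NP by the elimination-forest certificate used in Theorem~\ref{thm:cr-npc}.

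For the reduction, given an undirected graph $G=(V,E)$ we construct in polynomial time a trim bideterministic DFA $A_G$ over $\{0,1\}$ satisfying $\crank(A_G)=\crank(G)+c$ for a fixed constant $c$ depending only on the construction. Since binary bideterminism caps both in- and out-degree of every state at $2$ and forbids two outgoing (respectively incoming) transitions from sharing a letter, each vertex $v\in V$ of degree $d_v$ is replaced by a hub gadget $H_v$ on $O(d_v)$ states, arranged as a short directed cycle with the labels $0$ and $1$ alternating along it; this exposes $d_v$ port states, each with one free outgoing and one free incoming slot of predetermined label. Every edge $\{u,v\}\in E$ is then realised as a pair of transitions between a port of $H_u$ and a port of $H_v$, with labels forced by the parities of the two chosen ports. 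A single initial state and a single final state are attached via short acyclic paths that contribute nothing to the cycle rank.

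The main obstacle, and the technical core of the reduction, is the global label-consistency step: each edge $\{u,v\}$ must be routed between ports of opposite parity in $H_u$ and $H_v$, which reduces to finding an Eulerian-type orientation of the (suitably pre-processed) symmetric digraph of $G$ and matching edge endpoints to ports accordingly; this is always feasible, since each gadget supplies equal numbers of ports of each parity after evening out degrees. Once the gadgets are in place, an optimal directed elimination forest for $G$ (Definition~\ref{defn:elimination-forest}) lifts to one for $A_G$ by following each elimination of $v\in V$ with a short fixed-length sequence that dismantles $H_v$, while any elimination forest for $A_G$ projects back to $G$ with comparable height, yielding $\crank(A_G)=\crank(G)+c$. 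Combined with McNaughton's Theorem, this establishes $\NP$-hardness of BIDETERMINISTIC STAR HEIGHT already for binary alphabets.
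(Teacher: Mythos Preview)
Your NP-membership argument via McNaughton's Theorem and the elimination-forest certificate is fine and matches the paper.

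The hardness argument, however, has a real gap. You assert that eliminating a vertex $v$ in $G$ can be simulated in $A_G$ by ``a short fixed-length sequence that dismantles $H_v$'', giving $\crank(A_G)=\crank(G)+c$ for an absolute constant~$c$. But your hub $H_v$ is a directed cycle on $\Theta(d_v)$ ports; removing a single vertex from it leaves a directed \emph{path}, which still routes from earlier ports to later ports. Hence a cycle in $G$ of the form $u_i\text{--}v\text{--}u_j\text{--}\cdots\text{--}u_i$ may persist in $A_G$ after one hub vertex is deleted, so the strongly connected components of $A_G$ minus that vertex need not correspond to those of $G-v$. Truly severing the routing role of $H_v$ would seem to require removing a number of hub vertices growing with $d_v$, not a constant. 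The converse direction (``projects back to $G$ with comparable height'') is stated too loosely to yield the required equality. As written, the reduction is not established.

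The paper avoids this entirely by a two-step argument. First, it reduces from {\bf CYCLE RANK} over an \emph{unbounded} alphabet: take $\Sigma=E$, state set $V$, a single transition labelled $(x,y)$ from $x$ to $y$ for each edge, and one fixed vertex as both initial and unique final state. This automaton is trivially bideterministic (each letter names its own source and target), it is its own minimal trim DFA, and its underlying digraph is literally $G$; so by McNaughton's Theorem the star height equals $\crank(G)$ exactly, with no gadget analysis whatsoever. Second, to reach a binary alphabet, the paper appeals to a classical result of McNaughton that the homomorphism $\rho(a_i)=a^i b^{\,r+1-i}$ preserves star height, and builds the evident bideterministic automaton for $\rho(L(A))$ by splicing in length-$(r{+}1)$ chains. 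The lesson is that the alphabet reduction is best handled at the \emph{language} level via a star-height-preserving encoding, rather than by degree-reducing gadgets whose effect on cycle rank must then be controlled.
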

\begin{proof}
We first show $\NP$-completeness for the case of unbounded alphabet size,
and then provide a polynomial-time reduction to the case of binary alphabets.  

For membership in $\NP$, we use McNaughton's Theorem 
(Theorem~\ref{thm:McNaughton}) to reduce the problem 
to {\bf CYCLE RANK}, and the latter is in~$\NP$ 
by Theorem~\ref{thm:cr-npc}.

To establish $\NP$-hardness, we reduce from the problem of determining 
for a strongly connected digraph $G=(E,V)$ and an integer~$k$ whether the
cycle rank is at most~$k$, which is $\NP$-hard by Theorem~\ref{thm:cr-npc}. 
For a vertex~$v$ in~$V$,
define 
\[L(G,v) =\{\,w\in E^* \mid w \mbox{ is a walk in $G$ starting and ending in } v \,\}.\] 
A deterministic finite automaton~$A$ accepting $L(G,v)$
has~$V$ as set of states and for each edge $(x,y) \in E$ a transition labeled 
$(x,y)$ from~$x$ to~$y$. The start and only accepting state is~$v$. It is
readily verified that~$A$ accepts $L(G,v)$, is bideterministic, and that 
$A$ is the minimal trim DFA for this language. 
By construction, 
$\crank(A)=\crank(G)$, and $\crank(A)=\starheight(L)$ by Theorem~\ref{thm:McNaughton}. This completes the $\NP$-completeness proof for unbounded alphabet size.

We turn to the case of binary alphabets. Given an instance~$(A,k)$ 
of {\bf BIDETERMINISTIC STAR HEIGHT}, we construct in polynomial time 
a bideterministic finite automaton~$B$ over the alphabet~$\{a,b\}$, 
such that the star height of~$B$ equals the star height of~$A$. 
Assume the input alphabet of~$A$ is~$\Sigma=\{a_1,a_2,\ldots,a_r\}$. 
The automaton~$B$ will accept the homomorphic image of~$L(A)$ under 
the homomorphism $\rho:\Sigma\to\{a,b\}$ given by $\rho(a_i)=a^ib^{r+1-i}$,
for $1\le i\le r$. It is known~\cite{McNaughton69} that~$\rho$ 
preserves star height, that is, for every 
regular language~$L$, the image of~$L$ under~$\rho$ is of the same 
star height as~$L$. It remains to construct a bideterministic 
automaton~$B$ accepting~$\rho(L(A))$ in polynomial time: automaton~$B$
will have the states of~$A$, plus some extra states. 
For each state~$q$ copied from~$A$, we add~$r$
states $q^{+}_{1},q^{+}_{2},\ldots q^{+}_{r}$ and~$r$ more 
states $q^{-}_{1},q^{-}_{2},\ldots q^{-}_{r}$ to the state set of~$B$. 
The transition relation
of~$B$ is given by requiring that whenever there 
is a transition~$p\trans{a_i}q$ 
in~$A$, then~$B$ admits the sequence of transitions
\[p \trans{a} p^+_1 \trans{a}p^+_2\cdots \trans{a} p^+_{i} \trans{b} 
q^-_{r-i}\trans{b}\cdots q^-_{2}\trans{b}q^-_{1}\trans{b} q.\]
There are no other transitions in~$B$. By construction,~$B$
accepts~$\rho(L(A))$. It is easily verified 
that if~$A$ is bideterministic, then so is~$B$.
\end{proof} 

Returning again to {\bf CYCLE RANK}, we observe that
the digraph underlying a bideterministic automaton over
a binary alphabet always 
has maximum outdegree at most~$2$ 
and maximum total degree at most~$4$.
The correspondence
given by McNaughton's Theorem
between bideterministic automata and digraphs yields the 
following consequence:
\begin{corollary}\label{cor:cr-np-outdeg-2}
The {\bf CYCLE RANK} problem restricted 
to digraphs of maximum outdegree at most~$2$ and total 
degree at most~$4$ remains $\NP$-complete.\qed
\end{corollary}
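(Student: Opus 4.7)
The plan is to obtain this as a direct consequence of Theorem~\ref{thm:sh-npc} via McNaughton's Theorem, exploiting the simple degree accounting for bideterministic automata over a binary alphabet. Membership in \NP{} is already covered by Theorem~\ref{thm:cr-npc}, so the task is to give a polynomial-time reduction from \textbf{BIDETERMINISTIC STAR HEIGHT} over a binary alphabet to \textbf{CYCLE RANK} restricted to digraphs with maximum outdegree~$2$ and maximum total degree~$4$.

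First, I would make the degree bookkeeping explicit. Given a bideterministic DFA $A$ over the alphabet $\{a,b\}$, each state of~$A$ has at most one outgoing $a$-transition and one outgoing $b$-transition, so every vertex of the underlying transition digraph has outdegree at most~$2$. Since the reverse of $A$ is also a DFA (this is the defining property of bideterminism), each state has at most one incoming $a$-transition and one incoming $b$-transition, so the indegree is also at most~$2$. Counting distinct neighbors, this gives total degree at most~$4$.

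Next, I would pass from $A$ to its minimal trim version $A'$. Removing unreachable and dead states preserves both determinism and co-determinism, and the standard minimization of a bideterministic DFA (for instance, by applying Hopcroft-style minimization both in the forward and in the reverse direction) again yields a bideterministic DFA accepting $L(A)$; both steps are polynomial-time, and both leave the outdegree/indegree bounds intact. By Theorem~\ref{thm:McNaughton}, the underlying digraph $G(A')$ satisfies $\crank(G(A')) = \starheight(L(A))$. Hence $(A,k)$ is a yes-instance of \textbf{BIDETERMINISTIC STAR HEIGHT} if and only if $(G(A'),k)$ is a yes-instance of \textbf{CYCLE RANK}, and by the preceding paragraph $G(A')$ lies within the claimed degree restriction. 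Combining this with the $\NP$-hardness of \textbf{BIDETERMINISTIC STAR HEIGHT} on binary alphabets (Theorem~\ref{thm:sh-npc}) yields the corollary.

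The main thing to be careful about is not the reduction itself but the appeal to McNaughton's Theorem, which crucially requires $A'$ to be the \emph{minimal trim} DFA for $L(A)$; this is precisely why the intermediate minimization step is needed rather than working directly with the automaton $B$ from the proof of Theorem~\ref{thm:sh-npc}. Once this is in place, the degree bounds fall out of bideterminism essentially for free, and no additional gadget construction is required.
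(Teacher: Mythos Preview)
Your proposal is correct and follows the same route as the paper: reduce from \textbf{BIDETERMINISTIC STAR HEIGHT} over a binary alphabet via McNaughton's Theorem, noting that the underlying digraph of a binary bideterministic automaton has outdegree at most~$2$ and total degree at most~$4$. You are also right to insist that McNaughton's Theorem applies only to the \emph{minimal trim} DFA; the paper glosses over this point. One simplification you may use: a trim bideterministic DFA is automatically minimal (this is essentially Brzozowski's observation), so it suffices to trim~$A$---no separate Hopcroft-style minimization pass is needed, and trimming obviously preserves both the bideterminism and the degree bounds.
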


\section{Conclusion}\label{sec:conclusion}

In this work, we explored measures for the complexity of digraphs,
and their applications. We paid particular attention to
the cycle rank of digraphs 
and its relation to other digraph complexity measures, 
as well as its connection to the star height of regular languages. 
A tabular summary of our main algorithmic results is given 
in the Appendix.

Regarding cycle rank, the undirected case seems to be much
better understood than the general case. 
An intriguing open question
is whether the cycle rank problem is fixed-parameter tractable. This is known
to be the case on undirected graphs, see~\cite{BDJKKMT98}.
 
Regarding the star height problem, the picture is even less 
clear. The main problem,
namely the decidability status, has been settled for more than~$20$ 
years now. Still, the computational complexity
of this problem is not well understood. From the viewpoint of 
a computational complexity, 
we studied the ``easiest hard case'', and showed that 
(the non-succinct version of) this problem is $\NP$-hard. 
Currently the best upper bound~\cite{Kirsten10} is $\EXPSPACE$. 
Tightening the eminent gap between these bounds 
is surely a challenging theme for further research.

\subsubsection*{Acknowledgment}
The author would like to thank Markus Holzer for carefully 
reading an earlier draft of this paper, 
and for providing some valuable suggestions.

\bibliographystyle{plain}
\bibliography{star-height}
\appendix
\section{Appendix}
%arraystretch redefined in this environment:

\begin{center}{
\renewcommand{\arraystretch}{1.3}
\renewcommand{\tabcolsep}{2mm}

\centering
\begin{tabular}{|r@{  }p{0.7\textwidth}|}
\hline\hline
\multicolumn{2}{|c|}{{\bf CYCLE RANK}}\\ 
\hline
Instance. & A digraph $G$ and an integer $k$. \\
Question. & Is the cycle rank of $G$ at most $k$? \\
Good news. & Approximable within~$O((\log n)^{3/2})$ in polynomial time 
(Thm.~\ref{thm:approx-cr}). Exact solution can be computed in 
time~$O^*(1.9129^n)$ for digraphs with maximum outdegree at most~$2$; 
and for unbounded outdegree in time $O^*(2^n)$ 
(Thm.~\ref{thm:algos-crank}).\\
Bad news. & $\NP$-complete (Thm.~\ref{thm:cr-npc}). Problem 
is $\NP$-hard already
for digraphs of maximum outdegree~$2$ and maximum 
total degree~$4$ (Cor.~\ref{cor:cr-np-outdeg-2}); 
$\NP$-hard also for some classes of undirected graphs 
(e.g., bipartite and cobipartite)~\cite{BDJKKMT98}. \\
\hline\hline
\end{tabular}

\medskip

\centering
\begin{tabular}{|r@{  }p{0.7\textwidth}|}
\hline\hline
\multicolumn{2}{|c|}{{\bf DIRECTED FEEDBACK VERTEX SET}}\\ 
\hline
Instance. & A digraph $G$ and an integer $k$. \\
Question. & Does $G$ admit a directed feedback vertex set 
of cardinality at most $k$? \\
Good news. & For digraphs with maximum outdegree at most~$2$,
exact solution can be computed in time~$O^*(1.9129^n)$   
(Thm.~\ref{thm:algo-dfvs}); 
and in time $O^*(1.9977^n)$ for unbounded outdegree~\cite{Razgon07}.
Problem is fixed-parameter tractable~\cite{CLLOR08}.\\
Bad news. & $\NP$-complete, already
for digraphs of maximum outdegree~$2$~\cite[Problem~GT7]{GJ79}. \\
\hline\hline
\end{tabular}

\medskip

\centering
\begin{tabular}{|r@{  }p{0.7\textwidth}|}
\hline\hline
\multicolumn{2}{|c|}{{\bf BIDETERMINISTIC STAR HEIGHT}}\\ 
\hline
Instance. & A bideterministic finite automaton $A$ and an integer $k$. \\
Question. & Is the star height of $L(A)$ at most $k$? \\
Good news. & Approximable within~$O((\log n)^{3/2})$ in polynomial time 
(Thm.~\ref{thm:approx-sh}). Exact solution can be computed in 
time~$O^*(1.9129^n)$ for binary alphabets; 
and for unbounded alphabet size in time $O^*(2^n)$ 
(Thm.~\ref{thm:algos-sh}).\\
Bad news. & $\NP$-complete; $\NP$-hardness holds already for 
binary alphabets (Thm.~\ref{thm:sh-npc}). \\
\hline\hline
\end{tabular}

\medskip

\centering
\begin{tabular}{|r@{  }p{0.7\textwidth}|}
\hline\hline
\multicolumn{2}{|c|}{{\bf STAR HEIGHT}}\\ 
\hline
Instance. & A deterministic finite automaton $A$ and an integer $k$. \\
Question. & Is the star height of $L(A)$ at most $k$? \\
Good news. & Problem is decidable~\cite{Hashiguchi88}. 
Exact solution can be computed within 
exponential space and doubly exponential time~\cite{Kirsten10}.\\
Bad news. & $\NP$-hard, already for binary alphabets (Thm.~\ref{thm:sh-npc}). 
Problem is $\PSPACE$-hard if input given by an nondeterministic 
finite automaton in place of a deterministic one~\cite{HR78}.\\
\hline\hline
\end{tabular}
}
\end{center}
\end{document}